\newcommand*{\rom}[1]{\expandafter\@slowromancap\romannumeral #1@}
\numberwithin{equation}{section}
\tikzset{mynode/.style={inner sep=2pt,fill,outer sep=0,circle}}
\newcolumntype{M}[1]{>{\centering\arraybackslash}m{#1}}
\tikzset{options/.code={\tikzset{#1}}}
\definecolor{Blue}{rgb}{0.88,1,1}
\definecolor{Gray}{gray}{0.9}
\definecolor{Green}{rgb}{0.0, 0.5, 0.0}
\newtheorem{theorem}{Theorem}[section]
\newtheorem{corollary}{Corollary}[theorem]
\newtheorem{lemma}[theorem]{Lemma}
\theoremstyle{remark}
\newtheorem{remark}{Remark}[section]
\title{A New Perspective on Determining Disease Invasion and Population Persistence in Heterogeneous Environments}
\author[1]{Poroshat Yazdanbakhsh}
\author[]{Mark Anderson}
\author[2]{Zhisheng Shuai}
\affil[1]{\small Department of Mathematics and Computer Science, Rollins College, Winter Park, Florida, 32789, USA}
\affil[2]{\small Department of Mathematics, University of Central Florida, Orlando, Florida, 32816, USA}
\date{}
\begin{document}

\maketitle
\begin{abstract}

We introduce a new quantity known as the network heterogeneity index, denoted by $\mathcal{H}$, which facilitates the investigation of disease propagation and population persistence in heterogeneous environments. Our mathematical analysis reveals that this index embodies the structure of such networks, the disease or population dynamics of patches, and the dispersal between patches. We present multiple representations of the network heterogeneity index and demonstrate that $\mathcal{H}\geq 0$. Moreover, we explore the applications of $\mathcal{H}$ in epidemiology and ecology across various heterogeneous environments, highlighting its effectiveness in determining disease invasibility and population persistence.
\end{abstract}
\noindent \textbf{Keywords:}  Laplacian matrix; Group inverse; Perron root; Network average; Network heterogeneity index; Spatial population dynamics; Singe species; SIS model.
\section{Introduction}\label{sec:introduction}

Heterogeneous networks have gained significant popularity in various fields, including biology, physics, engineering, and public health, in recent decades. The significance of these networks stems from their ability to incorporate diverse characteristics and behaviors, providing invaluable insights into understanding complex systems. Real-world instances of these networks, such as the internet, airline networks, food webs, and social networks, illustrate how entities like individuals, devices, cities, and countries can be interconnected and engage in interactions with one another.

A heterogeneous network can be modeled as a digraph consisting of vertices (nodes) and directed arcs. In the realm of mathematical biology, vertices represent patches (regions), and directed arcs represent movements between patches through various means such as roads, waterways, and air routes. The structure of heterogeneous networks and the dispersal between patches give rise to mathematical complexities involving Laplacian matrices, which serve as algebraic representations of these networks.

This paper aims to explore how the structure of heterogeneous environments, the dynamics of patches, and the dispersal between patches impact the population or disease evolution. We seek to identify the key factors that facilitate or hinder the propagation of individuals or disease in heterogeneous environments. This work holds significant importance in our rapidly globalizing world,
where not only the distribution and population of species are affected but also the emergence of epidemics and pandemics, such as COVID-19, is influenced.

Given the increasingly interconnected nature of our world, the progression of an infectious disease or a population of species heavily depends on the connectivity of patches. This has captured the attention of ecologists and epidemiologists, who are interested in studying multi-patch disease or population models in heterogeneous environments compromised of connected patches. 
 
The impact of the spatial heterogeneity and rates of movements on population survival and extinction have been investigated in a variety of settings, for example the evolution of dispersal in patchy landscapes
 investigated by~\cite{cantrell2020evolution, hastings1982dynamics, holland2008strong, kirkland2006evolution,roth2014persistence}. Numerous work has been conducted on the population dynamics of a system of stochastic differential equations, exploring both discrete-time and discrete-space in studies done by~\cite{benaim2009persistence, schreiber2010interactive, schreiber2011persistence}, as well as continuous-time and discrete-space work done by~\cite{evans2013stochastic,hening2018stochastic}.

There has also been a considerable amount of work that captures the interplay between dispersal and heterogeneous environments and how this can influence disease spread, see~\cite{allen2007asymptotic, arino2003multi, chen2020asymptotic,chen2022two,eisenberg2013, gao2019travel, kirkland2021impact, lu2023relative} and references therein. Tien et al.~\cite{tien2015disease} approximated the network basic reproduction number $\mathcal{R}_0$, a threshold parameter that determines whether a disease dies out or persists,
  up to the second-order perturbation term for a water-borne disease, such as cholera, in a heterogeneous environment. This approximation, inspired by Langenhop's work~\cite{langenhop1971laurent}, employs the Laurent series approach on the Laplacian matrix of a heterogeneous network. This approximation is the key motivation of our work in this paper.

  Previous studies on disease spread in heterogeneous environments primarily centered around theoretical analysis based on the basic reproduction number $\mathcal{R}_0$, which is the largest eigenvalue of a non-negative matrix (the next generation matrix). In this paper, we introduce a novel perspective to examine the impact of dispersal and environmental heterogeneity  
 on disease dynamics.   Our approach is based on investigating the spectral bound of a (not necessarily nonnegative) matrix. Furthermore, our approach can be applicable to understanding population survival and extinction in ecological models in heterogeneous environments. 

An evolution of species or invasion of disease can be regarded as the stability problem of a Jacobian matrix $J$ of a linearized system at an equilibrium. Particularly, the sign of the spectral bound of $J$,  denoted by $s(J)$, determines whether the equilibrium is stable or not. Biologically, a population or an infectious disease eventually dies out over time if $s(J)<0$, and it persists if $s(J)>0$; see for example~\cite{diekmann2010construction,van2002reproduction} for the connection between $s(J)$ and $\mathcal{R}_0$, and~\cite{cosner1996variability, li2010global, lu1993global} for the stability of population dynamics.

The linearization at the equilibrium in spatial population models often leads to the Jacobian matrix $J$ in the following form
\begin{equation}\label{eq:J}
    J= Q - \mu L,
\end{equation}
where $Q=\mathrm{diag} \{q_{i}\}$ is a diagonal matrix with $q_i$, for $1 \leq i \leq n$, describing the population or disease dynamics of patch $i$ in isolation, $\mu>0$ denotes the dispersal rate, and $L$ denotes an $n \times n$ Laplacian matrix corresponding to the dispersal in the network.

As we will explain later in great detail, $J$ in~\eqref{eq:J} can be perceived as a perturbation matrix of $-L$. Furthermore, by the Perron-Frobenius theory~\cite{berman1994nonnegative, horn2013matrix}, $s(J)$ is an eigenvalue of $J$. Thus, theoretical analysis of $s(J)$ can fall within the scope of the analytical perturbation theory~\cite{kato2013perturbation, katoshort}. 
In this paper, we establish an expansion for $s(J)$ as
\begin{equation}\label{eq:1s(j)}
s(J)= \sum_{i=1}^n q_i \theta_i +\frac{1}{\mu} \sum_{i=1}^n \sum_{j=1}^n q_i \ell^\#_{ij} q_j \theta_j+o\left(\frac{1}{\mu}\right),
\end{equation}
where ${\pmb\theta} =(\theta_1, \dots, \theta_n)^\top$ denotes the positive, normalized right null vector of $L$ and $L^\# = (\ell^\#_{ij})$ denotes the group inverse of $L$. 
The first term in expansion~\eqref{eq:1s(j)} is regarded as the {\it network average}, see~\cite{tien2015disease}, and denoted by
\begin{equation}\label{eq:net-A}
  \mathcal{A}= \sum_{i=1}^n q_i \theta_i. 
\end{equation}
In fact, $\mathcal{A}$ is expressed as the weighted average of the patch dynamics $q_i$, with weights according to the the distribution of  $\theta_i$.
We define the {\it network heterogeneity index} as
\begin{equation}\label{eq:net-H}
    \mathcal{H}= \sum_{i=1}^n \sum_{j=1}^n q_i \ell^\#_{ij} q_j \theta_j.
\end{equation}
Additionally, $\mathcal{H}$ can be represented in the following variational formula
$$\mathcal{H} = \frac{1}{2} \sum_{i=1}^n \sum_{j \neq i}\ell^\#_{ij}(q_i - q_j)^2\theta_j\geq 0,$$
with the equality holding if and only if $q_i = q_j$ for $1\leq i,j \leq n$. In analog to the first and second moments,  $\mathcal{H}$ in~\eqref{eq:net-H} can be interpreted as covariance.

The remainder of this paper is organized as follows: Section~\ref{sec:preliminary} gives a brief summary of notation and concepts that are used throughout the paper. Section~\ref{sec:perturbation theories} provides rigorous mathematical analysis to investigate the effect of perturbing a singular and irreducible M-matrix on its principal eigenvalue. Section~\ref{sec:networkheterogenity} introduces the network heterogeneity index and discusses some of its fundamental properties. 
 Sections~\ref{sec:application} and~\ref{sec:application2}
illustrate the significance of the network heterogeneity index in determining disease invasion and population persistence in epidemiology and ecology over various heterogeneous network structures. Section~\ref{sec:conclusion} provides concluding remarks.

\section{Notation and Terminology}\label{sec:preliminary}
In this section, we present a brief overview of notation and the well-established results in the literature that form the basis of our investigation. Throughout this paper, unless explicitly specified, we consider all matrices to be real-valued and $n$ dimensional. 

 Let $A=(a_{ij})$ be a matrix and $\sigma(A)$ be the set of eigenvalues of $A$. Denote $s(A)$ as the \textit{spectral bound} of $A$ given by
$$s(A)= \max \{ \mathrm{Re} (\lambda): \lambda \in \sigma(A)\}.$$
For our purpose, let $M=(m_{ij})$ denote the matrix corresponding to the movements between patches in a heterogeneous network of $n$ patches, where $m_{ij}\geq 0$  represents the movement rate from patch $j$ to $i\not=j$ and $m_{ii}=0$. The Laplacian matrix $L$ associated to the movement matrix $M$ is  defined as
\begin{equation}\label{eq:Lap-mat}
 L = (\ell_{ij})=\begin{pmatrix}
  \sum \limits_{j \neq 1} m_{j1} & -m_{12} & \cdots & -m_{1n}\\
 -m_{21} & \sum \limits_{j \neq 2} m_{j2} & \cdots & -m_{2n}\\
 \vdots & \vdots & \ddots& \vdots\\
 -m_{n1} & -m_{n2} & \cdots & \sum \limits_{j \neq n} m_{jn}
\end{pmatrix}.
\end{equation}
The diagonal entry $(i,i)$ of~\eqref{eq:Lap-mat} tracks all the movements in the network to patch $i$, and 
 off-diagonal entry $(i,j)$, $ i \neq j$, describes the movement from patch $j$ to patch $i$. 
Note that $L$ is irreducible (as $M$ is irreducible) and each of its columns sum equals $0$. Thus, $L$ is singular with the algebraic multiplicity of $0$ being simple, and hence $\mathrm{dim}(\mathrm{ker}(L))=1$, see~\cite{ fiedler1973algebraic,merris1994laplacian}. 
As a result, $\mathbf{1}^\top = (1)_{1 \times n}$, the vector of all ones on its entries, is the left null vector of $L$. Let $\pmb{\theta}=(\theta_1, \dots, \theta_n)^\top$ denote the positive right null vector of $L$, normalized so that $\mathbf{1}^\top \pmb{\theta} =1$. Thus,
\begin{equation*}
 \quad \mathbf{1}^\top L =0 \quad \quad \mathrm{and} \quad \quad  \quad L \pmb{\theta} =0.
\end{equation*}
Notice that since $-L$ is irreducible with $-\ell_{ij}\geq 0$, for $i \neq j$, by the Perron-Frobenius theory, see~\cite{berman1994nonnegative, horn2013matrix,yang2010further},
\begin{equation}\label{eq:per-root-neg-L}
 s(-L)=0,
\end{equation}
is the largest eigenvalue of $-L$, commonly referred to as the {\it Perron root} of $-L$, and thus $\mathbf{1}^\top$ and $\pmb{\theta}$ are the left and normalized right {\it Perron vectors} of $-L$, respectively. 

It follows from~\cite{kirkland2013group} that while $L$ is not invertible there exists a unique matrix $L^\#:=(\ell^\#_{ij})$, called the \textit{group inverse} of $L$, which satisfies the following properties:

\begin{equation}\label{groupinverse-prop}
  (a) \quad LL^\# = L^\# L,\qquad (b) \quad LL^\#L = L,  \qquad \mathrm{and}\qquad (c) \quad L^\# L L^\# = L^\#.
\end{equation}
In addition, as demonstrated in~\cite{brualdi1991combinatorial,meyer1978singular}, the connection between $L$ and $L^\#$ is given by
\begin{equation}\label{LL-sharp}
    LL^\# = I_n - \pmb{\theta}\mathbf{1}^\top,
\end{equation}
where $I_n$ denotes the identity matrix.
For further background material on generalized inverses see~\cite{ ben2003generalized,campbell2009generalized}, and on the use of group inverse in particular, see~\cite{kirkland2013group}. Lemma~\ref{lem1} will be used repeatedly in later sections of this paper.
\begin{lemma}\label{lem1}
Let $L$ be an irreducible Laplacian matrix and $L^\#$ the corresponding group inverse. Then,
\begin{itemize}
    \item[(i)] $\mathbf{1}^\top L^\# =\mathbf{0}$.
    \item[(ii)] For every vector $\mathbf{x} \in \ker(L)$, $L^\# \mathbf{x} = L\mathbf{x}=\mathbf{0}$.
    \item[(iii)] For every vector $\mathbf{y} \in \mathrm{range}(L)$,
    $LL^\# \mathbf{y} =L^\#L \mathbf{y} =\mathbf{y}$.
\end{itemize}
\end{lemma}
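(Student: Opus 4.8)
The plan is to derive all three statements directly from the defining group-inverse identities in~\eqref{groupinverse-prop} --- commutativity $LL^\# = L^\#L$, the reflexive identity $LL^\#L = L$, and the group identity $L^\#LL^\# = L^\#$ --- together with the null-vector relations $\mathbf{1}^\top L = 0$ and $L\pmb\theta = 0$. The key preliminary observation is that commutativity lets me rewrite $L^\#$ with a factor of $L$ on whichever side I need: starting from $L^\# = L^\#LL^\#$ and substituting $L^\#L = LL^\#$ produces the two factorizations $L^\# = L(L^\#)^2$ (pushing $L$ to the left) and $L^\# = (L^\#)^2 L$ (pushing $L$ to the right). These are exactly what I would use to prove (i) and (ii), respectively.

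For (i), I left-multiply the factorization $L^\# = L(L^\#)^2$ by $\mathbf{1}^\top$, obtaining $\mathbf{1}^\top L^\# = (\mathbf{1}^\top L)(L^\#)^2 = \mathbf{0}$ since $\mathbf{1}^\top L = 0$. For (ii), I recall that $L$ is a singular irreducible Laplacian with a one-dimensional kernel spanned by $\pmb\theta$, so any $\mathbf{x}\in\ker(L)$ satisfies $L\mathbf{x}=\mathbf{0}$ by definition; applying the factorization $L^\# = (L^\#)^2 L$ then gives $L^\#\mathbf{x} = (L^\#)^2(L\mathbf{x}) = \mathbf{0}$, which is the remaining claim.

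For (iii), I let $\mathbf{y}\in\mathrm{range}(L)$ and write $\mathbf{y}=L\mathbf{z}$. The reflexive identity then yields $LL^\#\mathbf{y} = LL^\#L\mathbf{z} = L\mathbf{z} = \mathbf{y}$, and commutativity gives $L^\#L\mathbf{y} = LL^\#\mathbf{y} = \mathbf{y}$. Alternatively, one may substitute the explicit relation $LL^\# = I_n - \pmb\theta\mathbf{1}^\top$ from~\eqref{LL-sharp} and use $\mathbf{1}^\top\mathbf{y} = \mathbf{1}^\top L\mathbf{z} = 0$ to cancel the rank-one correction term.

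I do not anticipate a genuine obstacle, as each part collapses to a one-line manipulation once the two factorizations of $L^\#$ are in hand. The only points requiring slight care are the correct use of commutativity to position the factor of $L$ before invoking a null-vector relation, and --- in part (ii) --- invoking irreducibility to guarantee $\dim\ker(L)=1$, so that verifying the identity on the single generator $\pmb\theta$ suffices for the whole kernel.
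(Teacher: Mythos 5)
Your proposal is correct and follows essentially the same route as the paper: both derive (i) and (ii) from the factorizations $L^\# = L^\#LL^\#$ combined with commutativity (i.e.\ $L^\#=L(L^\#)^2=(L^\#)^2L$) together with $\mathbf{1}^\top L=\mathbf{0}$ and $L\mathbf{x}=\mathbf{0}$, and prove (iii) by writing $\mathbf{y}=L\mathbf{z}$ and applying $LL^\#L=L$. (The appeal to $\dim\ker(L)=1$ in part (ii) is unnecessary, since your computation works for an arbitrary $\mathbf{x}\in\ker(L)$, but this is harmless.)
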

\begin{proof} Part (i) follows from $\mathbf{1}^\top L =\mathbf{0}$ and~\eqref{groupinverse-prop}, since $\mathbf{1}^\top L^\#= \mathbf{1}^\top L^\# L L^\# = \mathbf{1}^\top L L^\# L^\# = \mathbf{0}$. To prove part (ii), assume $\mathbf{x} \in \mathrm{ker}(L)$. Since $L \mathbf{x} =\mathbf{0}$, it now follows from~\eqref{groupinverse-prop} that $$L^\#\mathbf{x}= L^\#LL^\# \mathbf{x}= L^\#L^\#L\mathbf{x}=\mathbf{0}.$$ 
Finally, suppose $\mathbf{y} \in \mathrm{range}(L)$. Then, there exists a vector $\mathbf{z} \in \mathbb{R}^n$ such that $\mathbf{y}=L\mathbf{z}$. Applying~\eqref{groupinverse-prop} gives
$$\mathbf{y} =L \mathbf{z} = LL^\#L \mathbf{z}= LL^\# \mathbf{y}= L^\# L \mathbf{y},$$ 
and thus part (iii) holds.
\end{proof}

Next, we derive the group inverse $L^\#$ from the Laplacian matrix $L$. Following the method description in~\cite[Section $4$]{kirkland2021impact}, $L$ can be partitioned as
$$
L=\left(\begin{array}{c|c}
\mathbf{\bar{1}}^\top \mathbf{w}&{- \mathbf{\bar{1}}^\top} B\\[2pt] \hline\\[-12pt]
 -\mathbf{w}  & B
\end{array}\right),
$$
\noindent where $B$ is an $(n-1) \times (n-1)$ invertible matrix, obtained from eliminating the first row and column of $L$, $\mathbf{\bar{1}}$ is an $(n-1) \times 1$ vector of all ones, and $\mathbf{w}=\dfrac{1}{\theta_1}B \pmb{\bar{\theta}}$, where $\theta_1$ is the first entry of $\pmb{\theta}$ and $\pmb{\bar{\theta}}=(\theta_2, \dots, \theta_n)^\top$. It follows from~\cite[Corollary $2.3.4$]{kirkland2013group} that
\begin{equation}\label{eq:Lsharp-partition}
L^\# = (\mathbf{\bar{1}}^\top B^{-1} \mathbf{\bar{\pmb{\theta}}}) \pmb{\theta} \mathbf{1}^\top + \left(\begin{array}{c|c}
0&-\theta_1 \bar{\mathbf{1}}^\top B^{-1}\\[2pt] \hline\\[-10pt]
-B^{-1}\pmb{\bar{\theta}}& B^{-1} - B^{-1}\pmb{\bar{\theta}}\bar{\mathbf{1}}^\top - \pmb{\bar{\theta}}\bar{\mathbf{1}}^\top B^{-1}
\end{array}\right).
\end{equation}

In Section~\ref{sec:perturbation theories}, we employ perturbation analysis theory and the properties of group inverse $L^\#$ to investigate how variations in the entries of $-L$ impact the value of its Perron root $0$.

\section{Perron Root of Perturbed $-L$}\label{sec:perturbation theories}
Let $r$ denote the spectral bound of the Jacobian matrix $J=Q - \mu L$ in~\eqref{eq:J}, given by
\begin{equation}\label{s(j)}
r=s(Q - \mu L),   
\end{equation}
where $Q=\mathrm{diag}\{q_i\}$, $\mu>0$, and $L$ is an irreducible Laplacian matrix. By the Perron-Frobenius theory~\cite{berman1994nonnegative, horn2013matrix}, $r$ is the Perron root of $Q - \mu L$.
Letting $\varepsilon=\dfrac{1}{\mu}$ and $\lambda = r \varepsilon$, we get
\begin{equation}\label{eq:lambda}
  \lambda= \varepsilon s(Q- \mu L) = s(\varepsilon Q - L).
 \end{equation}
Thus, $\lambda$ is the Perron root of $\varepsilon Q - L$. Denote $\pmb{\nu}$ as corresponding the positive normalized right Perron vector of $s(\varepsilon Q -L)$. By analytical perturbation theory~\cite{kato2013perturbation, katoshort}, these can be expanded as
$$\lambda=\lambda_0 + \varepsilon \lambda_1 +\varepsilon^2 \lambda_2 + \cdots \quad \mathrm{and} \quad \pmb{\nu} = \pmb{\nu}_0 +\varepsilon \pmb{\nu}_1 +\varepsilon^2 \pmb{\nu}_2+ \cdots.$$
In this section, we use the expansion for $\lambda$ to derive an expansion for $r$.

\begin{lemma}\label{lem:lam-nu}
  Suppose $Q=\mathrm{diag}\{q_i\}$, $\varepsilon>0$,  $L$ is an irreducible Laplacian matrix, and $\pmb{\theta}$ is the normalized right Perron vector of $-L$. If $\lambda$ is the Perron root of $\varepsilon Q - L$ and $\pmb{\nu}$ the corresponding positive normalized right Perron vector
where
\begin{equation}\label{eq:pert-expn}\lambda=\lambda_0 + \varepsilon \lambda_1 +\varepsilon^2 \lambda_2 + \cdots \quad \quad \mathrm{and} \quad \quad \pmb{\nu} = \pmb{\nu}_0 +\varepsilon \pmb{\nu}_1 +\varepsilon^2 \pmb{\nu}_2+ \cdots .
\end{equation}
Then,
\begin{itemize}
\item[(i)]
\hspace{30pt} $\displaystyle \lambda_0=0, \quad \lambda_k =\mathbf{1}^\top  Q\pmb{\nu}_{k-1}-\sum_{i=1}
^{k-1}\lambda_i \mathbf{1}^\top \pmb{\nu}_{k-i}, \quad \mathrm{for} \quad k \geq 1, $

and
\item[(ii)]
\hspace{30pt} $\displaystyle \pmb{\nu}_0=\pmb{\theta}, \quad L\pmb{\nu}_k  = Q \pmb{\nu}_{k-1}-\lambda_k\pmb{\theta} - \sum_{i=1}
^{k-1}\lambda_i \pmb{\nu}_{k-i}, \quad \mathrm{for} \quad k \geq 1$
\end{itemize}
\end{lemma}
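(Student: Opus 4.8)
The plan is to substitute the postulated power-series expansions \eqref{eq:pert-expn} directly into the defining eigenvalue relation $(\varepsilon Q - L)\pmb{\nu} = \lambda\pmb{\nu}$ and then equate coefficients of like powers of $\varepsilon$. Expanding the left-hand side as $\varepsilon Q\sum_j \varepsilon^j \pmb{\nu}_j - L\sum_j \varepsilon^j \pmb{\nu}_j$, the coefficient of $\varepsilon^k$ is $-L\pmb{\nu}_0$ at order $k=0$ and $Q\pmb{\nu}_{k-1} - L\pmb{\nu}_k$ for $k\geq 1$, while the Cauchy product on the right-hand side contributes $\sum_{i=0}^{k}\lambda_i \pmb{\nu}_{k-i}$. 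This reduces the lemma to analyzing the resulting hierarchy of linear identities, one at each order in $\varepsilon$.

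First I would treat the zeroth-order identity $-L\pmb{\nu}_0 = \lambda_0\pmb{\nu}_0$. Setting $\varepsilon=0$ collapses $\varepsilon Q - L$ to $-L$, so $\lambda_0$ is an eigenvalue of $-L$ with eigenvector $\pmb{\nu}_0$; because $\lambda$ is by hypothesis the Perron root of $\varepsilon Q - L$, continuity of the spectral bound forces $\lambda_0 = s(-L) = 0$ by \eqref{eq:per-root-neg-L}, and the accompanying positive Perron vector must be the normalized right null vector, giving $\pmb{\nu}_0 = \pmb{\theta}$. This establishes the base cases in both (i) and (ii).

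Next I would handle order $k\geq 1$. Using $\lambda_0 = 0$ to drop the $\lambda_0\pmb{\nu}_k$ term from the Cauchy sum and substituting $\pmb{\nu}_0 = \pmb{\theta}$, the $\varepsilon^k$ identity rearranges immediately into $L\pmb{\nu}_k = Q\pmb{\nu}_{k-1} - \lambda_k\pmb{\theta} - \sum_{i=1}^{k-1}\lambda_i\pmb{\nu}_{k-i}$, which is precisely part (ii). To extract part (i), I would left-multiply this identity by $\mathbf{1}^\top$: since $\mathbf{1}^\top L = \mathbf{0}$ the left-hand side vanishes, and by the normalization $\mathbf{1}^\top\pmb{\theta} = 1$ the coefficient of $\lambda_k$ equals $1$, so solving for $\lambda_k$ yields $\lambda_k = \mathbf{1}^\top Q\pmb{\nu}_{k-1} - \sum_{i=1}^{k-1}\lambda_i\mathbf{1}^\top\pmb{\nu}_{k-i}$, which is part (i).

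I expect the only genuinely delicate point to be the justification of the base case rather than the algebraic bookkeeping: one must invoke that the zero eigenvalue of $-L$ is simple (guaranteed by irreducibility, as noted after \eqref{eq:Lap-mat}) so that the analytic perturbation theory of Kato legitimizes the convergent expansions \eqref{eq:pert-expn} and guarantees that the branch continued from the Perron root is exactly the one identified by $\lambda_0 = 0$ and $\pmb{\nu}_0 = \pmb{\theta}$. Once this identification is secured, the remaining steps amount to a routine order-by-order matching of coefficients followed by a single projection against $\mathbf{1}^\top$.
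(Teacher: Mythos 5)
Your proposal is correct and follows essentially the same route as the paper: substitute the expansions into $(\varepsilon Q - L)\pmb{\nu} = \lambda\pmb{\nu}$, identify $\lambda_0=0$ and $\pmb{\nu}_0=\pmb{\theta}$ from the zeroth-order Perron-root identification, then obtain (ii) by rearranging the order-$k$ identity and (i) by left-multiplying by $\mathbf{1}^\top$. Your added remark on the simplicity of the zero eigenvalue as the justification for the Kato expansion is a sensible amplification of what the paper only cites implicitly.
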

\begin{proof}
Since $(\varepsilon Q-L) \pmb{\nu} = \lambda \pmb{\nu}$,
\begin{equation}\label{eq:pert-eqn}
 (\varepsilon Q -  L)(\pmb{\nu}_0+\varepsilon \pmb{\nu}_1 +\varepsilon^2 \pmb{\nu}_2+ \cdots )= (\lambda_0 + \varepsilon \lambda_1 +\varepsilon^2 \lambda_2 + \cdots)(\pmb{\nu}_0 +\varepsilon \pmb{\nu}_1 +\varepsilon^2 \pmb{\nu}_2+ \cdots).  
\end{equation}
Comparing terms at order zero gives
\begin{equation}\label{eq:e0}
-L \pmb{\nu}_0= \lambda_0 \pmb{\nu}_0.   
\end{equation}
By the assumption, $\lambda$ is the Perron root of $\varepsilon Q- L$, and thus $\lambda_0$ is the Perron root of $-L$. It follows from \eqref{eq:per-root-neg-L} that \begin{equation*}
\lambda_0=0 \quad \mathrm{and} \quad \pmb{\nu}_0 = \pmb{\theta}.
\end{equation*}
\noindent Comparing terms at order one in~\eqref{eq:pert-eqn} and applying $\pmb{\nu}_0=\pmb{\theta}$ yields
\begin{equation}\label{eq:lambda1-nu1}
Q\pmb{\theta}-L\pmb{\nu}_1 =\lambda_1 \pmb{\theta}.
\end{equation}
Since $\mathbf{1}^\top \pmb{\theta}=1$ and $\mathbf{1}^\top L=0$, multiplying both sides of~\eqref{eq:lambda1-nu1} by $\mathbf{1}^\top$ gives 
\begin{equation}\label{eq:lambda1}
    \lambda_1 = \mathbf{1}^\top Q \pmb{\theta}.
\end{equation}
Similarly, comparing terms in~\eqref{eq:pert-eqn} at order $k$, for $k \geq 2$, in~\eqref{eq:pert-eqn} yields
\begin{equation}\label{eq:lambdak-nuk}
    Q\pmb{\nu}_{k-1}-L\pmb{\nu}_k = \lambda_k\pmb{\theta}+\sum_{i=1}
^{k-1}\lambda_i \pmb{\nu}_{k-i},\end{equation}
and multiplying both sides of~\eqref{eq:lambdak-nuk} by $\mathbf{1}^\top$ from the left side results in
\begin{equation*}
 \lambda_k =\mathbf{1}^\top  Q\pmb{\nu}_{k-1}-\sum_{i=1}
^{k-1}\lambda_i \mathbf{1}^\top \pmb{\nu}_{k-i}, \quad \mathrm{for} \quad k \geq 2.
\end{equation*}
Finally, we derive $\pmb{\nu}_k$ terms in~\eqref{eq:pert-eqn}, for $k\ge 1$, by arranging~\eqref{eq:lambda1-nu1} and~\eqref{eq:lambdak-nuk} as
\begin{align*}
   L\pmb{\nu}_1 &=Q\pmb{\theta}-\lambda_1 \pmb{\theta},\\
    L\pmb{\nu}_k & = Q \pmb{\nu}_{k-1}-\lambda_k\pmb{\theta} - \sum_{i=1}
^{k-1}\lambda_i \pmb{\nu}_{k-i}, \quad \mathrm{for} \quad k \geq 2,\end{align*}
thus concluding the proof.
\end{proof}
It can be seen from Lemma~\ref{lem:lam-nu} that the computation of $\pmb{\nu}_k$ and $\lambda_k$ depend recursively on the values of $\pmb{\nu}_i$ and $\lambda_i$ for $1\leq i \leq k-1$, making the computation of the higher terms complicated.

The following result derives $\lambda$ up to the second-order perturbation term for $\lambda$ and $\pmb{\nu}$ up to the first-order perturbation term when the diagonal matrix $Q$ is not a scalar multiple of the identity matrix $I_n$. Furthermore, it examines the particular scenario when $Q$ is a scalar multiple of $I_n$.

\begin{theorem}\label{thm:main} Suppose $Q=\mathrm{diag}\{q_i\}$, $\varepsilon>0$,  $L$ is an irreducible Laplacian matrix, and $\pmb{\theta}$ is the normalized right Perron vector of $-L$.
If $\lambda$ isthe Perron root of $\varepsilon Q -L$ and $\pmb{\nu}$ the corresponding right Perron vector, then the following statements hold:

\begin{itemize}

\item[(i)] If $Q$ is not a scalar multiple of $I_n$, then, for some $\alpha \in \mathbbm{R}$,
\begin{equation*}
    \lambda= (\mathbf{1}^\top Q \pmb{\theta} )\varepsilon+ (\mathbf{1}^\top Q L^\#Q\pmb{\theta}) \varepsilon^2+o(\varepsilon^2) \quad \mathrm{\it and} \quad \pmb{\nu}=\pmb{\theta}+ ( \alpha I_n + L^\#Q) \pmb{\theta}\varepsilon+o(\varepsilon)
\end{equation*}

\item[(ii)] If $Q$ is a scalar multiple of $I_n$, that is, $Q= qI_n$ for some $q \in \mathbbm{R}$, then
\begin{equation*}\label{eq:lambda1-new}
    \lambda= q\varepsilon \quad \mathrm{\it and} \quad \pmb{ \nu} =\pmb{\theta}.
\end{equation*}

\end{itemize}
\end{theorem}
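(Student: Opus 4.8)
The plan is to feed the recursive relations from Lemma~\ref{lem:lam-nu} into the group-inverse identities collected in Lemma~\ref{lem1}, reading off the coefficients $\lambda_1,\lambda_2$ and the leading correction $\pmb{\nu}_1$ explicitly. From Lemma~\ref{lem:lam-nu} I already have $\lambda_0=0$, $\pmb{\nu}_0=\pmb{\theta}$, and $\lambda_1=\mathbf{1}^\top Q\pmb{\theta}$, which is exactly the first-order coefficient claimed in (i). The substance of (i) is therefore to solve the first-order vector equation $L\pmb{\nu}_1 = Q\pmb{\theta}-\lambda_1\pmb{\theta}$ for $\pmb{\nu}_1$ and then to evaluate $\lambda_2$.

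To solve for $\pmb{\nu}_1$, I would first check the solvability condition: multiplying the right-hand side by $\mathbf{1}^\top$ and using $\mathbf{1}^\top\pmb{\theta}=1$ gives $\mathbf{1}^\top(Q\pmb{\theta}-\lambda_1\pmb{\theta})=\lambda_1-\lambda_1=0$, so $Q\pmb{\theta}-\lambda_1\pmb{\theta}\in\mathrm{range}(L)$. Applying $L^\#$ on the left and invoking $L^\#L=I_n-\pmb{\theta}\mathbf{1}^\top$ from~\eqref{LL-sharp}, together with $L^\#\pmb{\theta}=\mathbf{0}$ (Lemma~\ref{lem1}(ii)) and $\mathbf{1}^\top L^\#=\mathbf{0}$ (Lemma~\ref{lem1}(i)), collapses the equation to $\pmb{\nu}_1-(\mathbf{1}^\top\pmb{\nu}_1)\pmb{\theta}=L^\#Q\pmb{\theta}$. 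Setting $\alpha:=\mathbf{1}^\top\pmb{\nu}_1$ yields $\pmb{\nu}_1=(\alpha I_n+L^\#Q)\pmb{\theta}$, precisely the first-order term in the stated expansion of $\pmb{\nu}$; the free scalar $\alpha$ reflects the fact that a Perron vector is fixed only up to scale, so no normalization needs to be imposed here.

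For the second-order coefficient I would substitute this $\pmb{\nu}_1$ into the $k=2$ identity $\lambda_2=\mathbf{1}^\top Q\pmb{\nu}_1-\lambda_1\mathbf{1}^\top\pmb{\nu}_1$ from Lemma~\ref{lem:lam-nu}. Writing $\mathbf{1}^\top Q\pmb{\nu}_1=\mathbf{1}^\top QL^\#Q\pmb{\theta}+\alpha\,\mathbf{1}^\top Q\pmb{\theta}=\mathbf{1}^\top QL^\#Q\pmb{\theta}+\alpha\lambda_1$ and noting $\lambda_1\mathbf{1}^\top\pmb{\nu}_1=\alpha\lambda_1$, the $\alpha$-dependent pieces cancel, leaving $\lambda_2=\mathbf{1}^\top QL^\#Q\pmb{\theta}$. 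I regard this cancellation as the crux of the argument: it shows the second-order coefficient is well defined independently of the normalization ambiguity carried by $\alpha$, so the indeterminate in the vector $\pmb{\nu}$ does not leak into the scalar $\lambda$. Collecting $\lambda=\lambda_1\varepsilon+\lambda_2\varepsilon^2+o(\varepsilon^2)$ then gives the expansion in (i).

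Part (ii) I would handle separately and exactly, rather than through the series. When $Q=qI_n$ one checks directly that $(\varepsilon qI_n-L)\pmb{\theta}=\varepsilon q\,\pmb{\theta}-L\pmb{\theta}=\varepsilon q\,\pmb{\theta}$ using $L\pmb{\theta}=\mathbf{0}$, so $\pmb{\theta}$ is a strictly positive eigenvector of $\varepsilon qI_n-L$ with eigenvalue $\varepsilon q$; by the Perron--Frobenius characterization used in~\eqref{eq:per-root-neg-L}, a strictly positive eigenvector must correspond to the Perron root, giving $\lambda=q\varepsilon$ and $\pmb{\nu}=\pmb{\theta}$ with no remainder. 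The one subtlety worth flagging is that (ii) is an exact identity, whereas the asymptotic formula of (i) would also reproduce $\lambda=q\varepsilon+o(\varepsilon^2)$ in this case, since $L^\#Q\pmb{\theta}=qL^\#\pmb{\theta}=\mathbf{0}$ forces $\lambda_2=0$; isolating (ii) is what upgrades the conclusion from asymptotic to exact.
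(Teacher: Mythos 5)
Your proposal is correct. For part (i) you follow essentially the paper's route: the paper splits $\pmb{\nu}_1=\mathbf{u}+\mathbf{w}$ along $\mathrm{ker}(L)\oplus\mathrm{range}(L)$ and recovers $\mathbf{w}=L^\#Q\pmb{\theta}$ by hitting $L\mathbf{w}=Q\pmb{\theta}-\lambda_1\pmb{\theta}$ with $L^\#$, whereas you apply $L^\#$ directly to $L\pmb{\nu}_1$ and invoke $L^\#L=I_n-\pmb{\theta}\mathbf{1}^\top$ to read off $\pmb{\nu}_1=(\mathbf{1}^\top\pmb{\nu}_1)\pmb{\theta}+L^\#Q\pmb{\theta}$; these are the same computation packaged differently, and your explicit identification $\alpha=\mathbf{1}^\top\pmb{\nu}_1$ together with the observation that $\alpha$ cancels out of $\lambda_2$ matches the cancellation in the paper's equation for $\lambda_2$. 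Where you genuinely diverge is part (ii): the paper grinds through the recursion, showing inductively that $\lambda_k=0$ and $\pmb{\nu}_k=\alpha_k\pmb{\theta}$ for $k\ge 2$ and then using the normalization $\mathbf{1}^\top\pmb{\nu}=1$ to force $\sum_k\alpha_k\varepsilon^k=0$, while you simply verify that $\pmb{\theta}$ is a strictly positive eigenvector of $\varepsilon qI_n-L$ with eigenvalue $q\varepsilon$ and conclude from Perron--Frobenius (for an irreducible matrix with nonnegative off-diagonal entries, only the Perron root admits a positive eigenvector) that this is the exact Perron pair. Your version is shorter, avoids any convergence bookkeeping for the series, and makes transparent why (ii) is an exact identity rather than an asymptotic one; the paper's version has the minor virtue of staying entirely within the recursive framework of Lemma~\ref{lem:lam-nu}. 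Both are sound.
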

\begin{proof}

(i)  It follows from Lemma~\ref{lem:lam-nu} that $\pmb{\nu}_0=\pmb{\theta}$. Also, $\pmb{\nu}_1 = \mathbf{u}+\mathbf{w}$, where $\mathbf{u} \in \mathrm{ker}(L)$ and $\mathbf{w} \in \mathrm{range}(L)$, since $\mathrm{ker}(L)\bigoplus \mathrm{range}(L)=\mathbbm{R}^n$, by~\cite{ben2003generalized}. Since $\mathrm{dim}(\mathrm{ker}(L))=1$, see~\cite{ fiedler1973algebraic,merris1994laplacian}, by the Perron Frobenius theorem, $\mathbf{u}= \alpha \pmb{\theta}$ for some $\alpha \in \mathbbm{R}$, and thus
\begin{equation}\label{eq:newnu1}
 \pmb{\nu}_1= \alpha \pmb{\theta} +\mathbf{w}.
\end{equation} 
By Lemma \ref{lem:lam-nu}(ii),
\begin{equation}\label{Lv}
   L \mathbf{w}= L\pmb{\nu}_1= Q \pmb{\theta} - \lambda_1 \pmb{\theta}.
\end{equation}
Multiplying the above equation from the left side by $L^\#$ and applying Lemma~\ref{lem1}(ii)-(iii) give 
    $$\mathbf{w}= L^\# L\mathbf{w} = L^\# Q \pmb{\theta} - \lambda_1 L^\# \pmb{\theta} = L^\#Q\pmb{\theta}.$$
Substituting $\mathbf{w}$ into~\eqref{eq:newnu1} yields
\begin{equation*}
    \pmb{\nu}_1 = \alpha \pmb{\theta}+ L^\#Q\pmb{\theta}.
\end{equation*}
Using Lemma \ref{lem:lam-nu}(i) to derive $\lambda_2$, we get 
\begin{equation}\label{eq:lambda2}
 \lambda_2= \mathbf{1}^\top Q \pmb{\nu}_1 - \lambda_1 \mathbf{1}^\top \pmb{\nu}_1=\alpha \mathbf{1}^\top Q \pmb{\theta} + \mathbf{1}^\top Q L^\# Q \pmb{\theta}  - \alpha \lambda_1 \mathbf{1}^\top \pmb{\theta} - \mathbf{1}^\top L^\# Q \pmb{\theta}.
\end{equation} 
Applying Lemma~\ref{lem1}(i), $\lambda_1 = \mathbf{1}^\top Q \pmb{\theta}$ from Lemma~\ref{lem:lam-nu}, and $\mathbf{1}^\top \pmb{\theta}=1$ gives
 $$\lambda_2 = \mathbf{1}^\top QL^\# Q\pmb{\theta}.$$

(ii) Since $Q=q I_n$ and $\mathbf{1}^\top \pmb{\theta}= 1$, by Lemma \ref{lem:lam-nu}(i),
\begin{equation}\label{eq:lam11}
   \lambda_1 =\mathbf{1}^\top Q \pmb{\theta} =\mathbf{1}^\top (q I_n) \pmb{\theta} = q.
\end{equation}
Additionally, by Lemma \ref{lem:lam-nu}(ii), $L \pmb{\nu}_1 =Q \pmb{\theta} - \lambda_1 \pmb{\theta} =(q I_n)\pmb{\theta} - q \pmb{\theta}=\mathbf{0}$. That is, $ \pmb{\nu}_1 \in \mathrm{ker}(L)$, and thus
\begin{equation}\label{eq:nu11}
\pmb{\nu}_1=\alpha_1 \pmb{\theta} \quad \text{for some $\alpha_1 \in \mathbbm{R}$},
\end{equation}
since $\mathrm{dim}(\mathrm{ker}(L))=1$.
Likewise, by Lemma \ref{lem:lam-nu},
\begin{align*}
    \lambda_2 &= \mathbf{1}^\top Q\pmb{\nu}_1 - \lambda_1 \mathbf{1}^\top \pmb{\nu}_1=\mathbf{1}^\top (qI_n)\alpha_1 \pmb{\theta} - q \mathbf{1}^\top \alpha_1 \pmb{\theta}=0,\text{ and}\\
    L\pmb{\nu}_2 &=Q \pmb{\nu}_1 - \lambda_2 \pmb{\theta} - \lambda_1 \pmb{\nu}_1 =(qI_n )\alpha_1 \pmb{\theta} - 0 \pmb{\theta} - q \alpha_1 \pmb{\theta} =\pmb{0},
\end{align*}
which implies that 
$\pmb{\nu}_2= \alpha_2 \pmb{\theta}$  for some $\alpha_2 \in \mathbbm{R}$.
 Similar to the above arguments, it can be verified that for $k \ge 2$, $\lambda_k=0$ and $\pmb{\nu}_k=\alpha_k \pmb{\theta}$ where $\alpha_k \in \mathbbm{R}$. Applying~\eqref{eq:lam11} and~\eqref{eq:nu11} to~\eqref{eq:pert-expn} yield
 \begin{equation}\label{eq:lamb-nu-ident}
  \lambda= q\varepsilon \quad \mathrm{and} \quad \pmb{\nu}=\pmb{\theta}+\sum_{k=1} ^\infty\alpha_k \pmb{\theta} \varepsilon^k=\pmb{\theta}(1+\sum_{k=1}^\infty\alpha_k \varepsilon^k).   
 \end{equation}
Since $\mathbf{1}^\top \pmb{\nu}=1$ and $\mathbf{1}^\top \pmb{\theta} =1$, multiplying $\pmb{\nu}$ in~\eqref{eq:lamb-nu-ident} by $\mathbf{1}^\top$ gives
$\sum_{k=1}^\infty\alpha_k \varepsilon^k=0$, and thus $\pmb{\nu} = \pmb{\theta}$.
\end{proof}
\begin{corollary}\label{rem:lambda} Suppose $Q=\mathrm{diag}\{q_i\}$, $\varepsilon>0$,  $L$ is an irreducible Laplacian matrix, and  $\pmb{\theta}= (\theta_1, \dots, \theta_n)^\top$ is the normalized right  Perron vector of $-L$.   The Perron root of $\varepsilon Q -L$ can be written as
\begin{equation*}\label{eq:lambda-new2}
    \lambda = (\sum_{i=1}^n q_i \theta_i )\varepsilon +  (\sum_{i=1}^n \sum_{j=1}^n q_i  \ell_{ij}^\#  q_j\theta_j )\varepsilon^2 + o(
    \varepsilon^2),
\end{equation*}
where $L^\#=(\ell^\#_{ij})$.
\end{corollary}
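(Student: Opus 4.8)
The plan is to derive Corollary~\ref{rem:lambda} as a direct component-wise rewriting of Theorem~\ref{thm:main}, treating the two cases separately but observing they collapse into one formula. First I would recall from Theorem~\ref{thm:main}(i) the coordinate-free expansion $\lambda = (\mathbf{1}^\top Q \pmb{\theta})\varepsilon + (\mathbf{1}^\top Q L^\# Q \pmb{\theta})\varepsilon^2 + o(\varepsilon^2)$ and then simply expand the two scalar coefficients into sums over indices using $Q = \mathrm{diag}\{q_i\}$ and $\pmb{\theta} = (\theta_1,\dots,\theta_n)^\top$.

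The main computation is the index expansion. For the first-order coefficient, $\mathbf{1}^\top Q \pmb{\theta} = \sum_{i=1}^n q_i \theta_i$ since $Q\pmb{\theta}$ has $i$-th entry $q_i \theta_i$ and left-multiplication by $\mathbf{1}^\top$ sums the entries. For the second-order coefficient, I would read off that the vector $Q\pmb{\theta}$ has entries $q_j \theta_j$, that $L^\#(Q\pmb{\theta})$ has $i$-th entry $\sum_{j=1}^n \ell^\#_{ij} q_j \theta_j$, that left-multiplication by $Q$ scales this to $q_i \sum_{j=1}^n \ell^\#_{ij} q_j \theta_j$, and finally that $\mathbf{1}^\top$ sums over $i$, yielding $\sum_{i=1}^n \sum_{j=1}^n q_i \ell^\#_{ij} q_j \theta_j$. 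Substituting both gives the stated formula.

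I would then close the argument by noting that the special case of Theorem~\ref{thm:main}(ii), where $Q = qI_n$, is already consistent with this single expression: in that case $q_i = q$ for all $i$, so $\sum_i q_i \theta_i = q \sum_i \theta_i = q$ (using $\mathbf{1}^\top\pmb{\theta} = 1$), matching $\lambda = q\varepsilon$, while the second-order term vanishes because $L^\# Q \pmb{\theta} = q L^\# \pmb{\theta} = \mathbf{0}$ by Lemma~\ref{lem1}(ii) (as $\pmb{\theta} \in \ker(L)$). Hence the formula in the corollary holds in both cases without exception, and no separate bookkeeping of the two scenarios is needed in the final statement.

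I do not expect any genuine obstacle here, since the result is purely a matter of expressing matrix-vector products in coordinates; the only point requiring care is verifying that the scalar-multiple case does not produce a spurious nonzero second-order term, which the $L^\#\pmb{\theta} = \mathbf{0}$ identity settles cleanly. The presentation should make clear that the corollary is simply the explicit, entrywise form of Theorem~\ref{thm:main}.
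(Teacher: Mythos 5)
Your proposal is correct and matches the paper's intent exactly: the paper states this corollary without proof as the immediate entrywise rewriting of Theorem~\ref{thm:main}(i), which is precisely your index expansion of $\mathbf{1}^\top Q \pmb{\theta}$ and $\mathbf{1}^\top Q L^\# Q \pmb{\theta}$. Your additional check that the scalar-multiple case of Theorem~\ref{thm:main}(ii) is subsumed by the same formula (via $L^\#\pmb{\theta}=\mathbf{0}$) is a correct and welcome detail that the paper leaves implicit.
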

We are now prepared to present our main result, that is, an expansion for $r= s(Q - \mu L)$ as in~\eqref{s(j)}. Furthermore, we utilize the following result to establish the sharp bounds for $r$, see~\cite{altenberg2012resolvent,chen2022two} for the proof.
\begin{lemma}\label{mon-conv}
Let $r=s(Q-\mu L)$ be defined as in~\eqref{s(j)}. Then,
\begin{equation}\label{eq:monotone}
  \dfrac{d r}{d \mu} \leq 0 \quad and \quad  \dfrac{d^2r}{d\mu^2}\ge 0, 
\end{equation}
with equalities holding in~\eqref{eq:monotone} if and only if $Q$ is a scalar multiple of $I_n$.
\end{lemma}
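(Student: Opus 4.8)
The plan is to treat $r(\mu)=s(Q-\mu L)$ by analytic perturbation theory in the variable $\mu$ itself, rather than in $\varepsilon = 1/\mu$ as in Theorem~\ref{thm:main}. Since $Q-\mu L$ is irreducible with nonnegative off-diagonal entries, Perron--Frobenius theory makes $r$ a simple eigenvalue, so $r(\mu)$, the normalized positive right Perron vector $\pmb{\nu}(\mu)$, and the positive left Perron vector $\mathbf{u}(\mu)$ all depend real-analytically on $\mu$. Fixing the normalization $\mathbf{u}^\top\pmb{\nu}=1$ and differentiating $(Q-\mu L)\pmb{\nu}=r\pmb{\nu}$ once, then multiplying on the left by $\mathbf{u}^\top$ and using $\mathbf{u}^\top(Q-\mu L)=r\mathbf{u}^\top$, the $\pmb{\nu}'$ terms cancel and I obtain $r'(\mu)=-\mathbf{u}^\top L\pmb{\nu}$. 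Hence the first inequality in~\eqref{eq:monotone} is equivalent to the single sign statement $\mathbf{u}^\top L\pmb{\nu}\ge 0$.

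For the second derivative I would differentiate the eigen-equation a second time. Writing $B=Q-\mu L-rI$, the vector $\pmb{\nu}'$ satisfies the singular system $B\pmb{\nu}'=(r'I+L)\pmb{\nu}$, whose right-hand side is annihilated by $\mathbf{u}^\top$ and therefore lies in $\mathrm{range}(B)$; exactly as in the mechanics of Lemma~\ref{lem1}, this is solved through the group inverse, $\pmb{\nu}'=B^\#(r'I+L)\pmb{\nu}+c\,\pmb{\nu}$ with $B^\#\pmb{\nu}=0$. Substituting into the twice-differentiated relation and projecting onto $\mathbf{u}^\top$ eliminates the $\pmb{\nu}''$ and the constant $c$, leaving the compact identity $r''(\mu)=2\,\mathbf{u}^\top L\,(rI-Q+\mu L)^\#\,L\pmb{\nu}$, in which $rI-Q+\mu L=-B$ is a singular, irreducible M-matrix with right and left null vectors $\pmb{\nu}$ and $\mathbf{u}$. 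Thus both claims reduce to sign facts: $\mathbf{u}^\top L\pmb{\nu}\ge 0$ for monotonicity and nonnegativity of the group-inverse quadratic form $\mathbf{u}^\top L(-B)^\# L\pmb{\nu}$ for convexity.

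Establishing these two signs is the crux and the step I expect to be the main obstacle, because $L$ is not symmetric and $\mathbf{u}\neq\pmb{\nu}$ in general, so neither quantity is a manifestly nonnegative Dirichlet form. In the reversible (undirected) case $\mathbf{u}=\pmb{\nu}$ and $L$ is positive semidefinite, whence $\mathbf{u}^\top L\pmb{\nu}=\tfrac12\sum_{i\neq j}m_{ij}(\nu_i-\nu_j)^2\ge 0$ and the convexity form is likewise transparent; I would use this as a guide. For the general directed case I would instead invoke the variational (Donsker--Varadhan) representation $r(\mu)=\sup_{\pi}\big(\sum_i q_i\pi_i-\mu\, I(\pi)\big)$, the supremum taken over probability vectors $\pi$, where the rate functional $I(\pi)\ge 0$ is independent of $\mu$ and vanishes precisely at the stationary distribution $\pmb{\theta}$. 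This exhibits $r(\mu)$ as a supremum of functions that are affine and non-increasing in $\mu$, giving convexity and monotonicity at once and identifying $r'(\mu)=-I(\pi^\ast(\mu))$ at the maximizer $\pi^\ast$. The equality case then follows: if $Q=qI_n$ then $\sum_i q_i\pi_i\equiv q$ and $r(\mu)\equiv q$, while if $Q$ is not scalar the strict minimum of $I$ at $\pmb{\theta}$ pushes the maximizer off $\pmb{\theta}$ for every $\mu$, forcing $r'<0$ and $r''>0$. As a consistency check, the expansion $r=\mathcal{A}+\mathcal{H}/\mu+o(1/\mu)$ with $\mathcal{H}\ge 0$ from Corollary~\ref{rem:lambda} shows $r$ decreasing toward $\mathcal{A}$ from above for large $\mu$, in agreement with the global statement. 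The purely self-contained alternative, namely proving nonnegativity of the group-inverse form in the second-derivative identity directly, is the delicate point that the cited reduction-principle results~\cite{altenberg2012resolvent,chen2022two} settle.
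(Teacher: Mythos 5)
The first thing to note is that the paper does not prove Lemma~\ref{mon-conv} at all: it is imported from the literature, with the reader referred to~\cite{altenberg2012resolvent,chen2022two}. Your attempt is therefore necessarily a different route from ``the paper's proof,'' and on the whole it is a viable one. The eigenvalue--derivative identities are correct: with the bi-orthogonal normalization $\mathbf{u}^\top\pmb{\nu}=1$ one does get $r'=-\mathbf{u}^\top L\pmb{\nu}$ and, after eliminating $\pmb{\nu}''$ and the kernel component of $\pmb{\nu}'$, $r''=2\,\mathbf{u}^\top L\,(rI-Q+\mu L)^{\#}L\pmb{\nu}$; and you are right that neither sign is manifest when $L$ is not symmetric --- this is precisely the content of Karlin's reduction theorem that the two cited papers establish. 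Your Donsker--Varadhan route is legitimate and close in spirit to the variational arguments in those references: since $s(Q-\mu L)=s(Q-\mu L^\top)$ and $-L^\top$ is an irreducible Markov generator with stationary distribution $\pmb{\theta}$, the representation $r(\mu)=\sup_{\pi}\bigl(\mathbf{q}^\top\pi-\mu I(\pi)\bigr)$ with $I\ge 0$ vanishing only at $\pmb{\theta}$ exhibits $r$ as a supremum of affine, non-increasing functions of $\mu$, which yields both inequalities in~\eqref{eq:monotone} simultaneously, and $Q=qI_n$ immediately gives $r\equiv q$.

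Two points need tightening before this is a complete proof. First, the Donsker--Varadhan formula for a non-reversible finite chain is itself a substantive theorem, so you are trading one citation for another; that is acceptable but should be stated. Second, the ``only if'' half of the equality claim is not fully argued: showing the maximizer $\pi^{*}(\mu)$ lies off $\pmb{\theta}$ (first-order gain in $\mathbf{q}^\top\pi$ versus second-order cost in $I$) does give $r'(\mu)=-I(\pi^{*}(\mu))<0$ via the envelope theorem, but a supremum of affine functions is only \emph{weakly} convex, so $r''>0$ does not follow from the maximizer being non-stationary. You would need an extra step --- for instance that $\pi^{*}(\mu)$ genuinely varies with $\mu$ so that $r'=-I(\pi^{*}(\mu))$ is strictly increasing, or an appeal to analyticity of $r$ together with the boundary values $r\to\max_i q_i$ as $\mu\to0$ and $r\to\mathcal{A}$ as $\mu\to\infty$, which rules out $r'\equiv0$ or $r''\equiv0$ unless all $q_i$ coincide but still does not give pointwise strictness of $r''$. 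Since the paper only uses the weak inequalities (in Theorem~\ref{them:appr-r}, to locate the sharp bounds~\eqref{low-up-bound}), this gap is harmless for the application, but it should be acknowledged or closed if the lemma is to be proved in the strict pointwise form stated.
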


\begin{theorem}\label{them:appr-r}
Let $r = s(Q - \mu L)$ be defined as in~\eqref{s(j)}. Then, the following statements hold:
\begin{itemize}
    \item[(i)] If $Q$ is not a scalar multiple of $I_n$, then

\begin{equation} \label{spect-expand}
r = \sum_{i=1}^n  q_i \theta_i + \frac{1}{\mu}\sum_{i=1}^n \sum_{j=1}^n q_i  \ell_{ij}^\# q_j \theta_j + o\left(\frac{1}{\mu}\right).
\end{equation}
\noindent Additionally, the sharp bounds of $r$ with respect to $\mu$ are given by
\begin{equation}\label{low-up-bound}
\sum_{i=1}^n  q_i \theta_i \le r \le  \max_i \{q_i\}. 
\end{equation} 
    \item[(ii)] If $Q$ is a scalar multiple of $I_n$, that is $q_1=\dots=q_n=q$ for $q \in \mathbbm{R}$, then
\begin{equation*}\label{eq:r1}
    r= q.
\end{equation*}
\end{itemize}
\end{theorem}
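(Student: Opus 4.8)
The plan is to reduce everything to the Perron-root expansion for $\varepsilon Q - L$ already obtained in Corollary~\ref{rem:lambda}, together with the monotonicity and convexity furnished by Lemma~\ref{mon-conv}. First I would establish the expansion in part (i). Recall from~\eqref{eq:lambda} that $\lambda = r\varepsilon = s(\varepsilon Q - L)$ with $\varepsilon = 1/\mu$, so that $r = \lambda/\varepsilon$. Corollary~\ref{rem:lambda} gives $\lambda = (\sum_i q_i\theta_i)\varepsilon + (\sum_{i,j} q_i\ell^\#_{ij}q_j\theta_j)\varepsilon^2 + o(\varepsilon^2)$; dividing through by $\varepsilon$ and substituting $\varepsilon = 1/\mu$ immediately yields the stated expansion~\eqref{spect-expand}. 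This step is essentially bookkeeping, since the analytic content already lives in the earlier perturbation theorem.

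Next I would dispose of part (ii), the scalar case. The cleanest route is to bypass the perturbation series and use the translation and positive-scaling behavior of the spectral bound: $s(qI_n - \mu L) = q + s(-\mu L) = q + \mu\, s(-L)$, and since $s(-L) = 0$ by~\eqref{eq:per-root-neg-L}, we obtain $r = q$ for every $\mu > 0$. Equivalently, one may quote Theorem~\ref{thm:main}(ii), where $\lambda = q\varepsilon$ gives $r = \lambda/\varepsilon = q$.

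The remaining work is the sharp bounds~\eqref{low-up-bound}, and here the ordering of steps matters. By Lemma~\ref{mon-conv}, when $Q$ is not a scalar multiple of $I_n$ the map $\mu \mapsto r(\mu)$ is strictly decreasing on $(0,\infty)$; hence its supremum is the right-hand limit at $0$ and its infimum is the limit at $\infty$, so it suffices to identify these two endpoint values. For the upper endpoint I would use continuity of the spectral bound in $\mu$ to get $\lim_{\mu\to 0^+} r = s(Q)$, and since $Q = \mathrm{diag}\{q_i\}$ is diagonal, $s(Q) = \max_i q_i$; monotonicity then gives $r \le \max_i q_i$. For the lower endpoint, the expansion of part (i) already shows $\lim_{\mu\to\infty} r = \sum_i q_i\theta_i$, so monotonicity gives $r \ge \sum_i q_i\theta_i$. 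Both bounds are sharp because each is attained only in the corresponding limit.

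The main obstacle I anticipate is not the algebra but the justification of these two endpoint limits. For the supremum one must argue that the spectral bound $s(Q - \mu L)$ extends continuously to $\mu = 0$, so that $\lim_{\mu\to 0^+} r = s(Q) = \max_i q_i$; this follows from the continuity of the eigenvalues in the matrix entries, independent of any multiplicity at $\mu = 0$. For the infimum one must confirm that the $o(1/\mu)$ remainder in~\eqref{spect-expand} genuinely vanishes as $\mu\to\infty$, so that the limit equals $\sum_i q_i\theta_i$; this is guaranteed by the analytic perturbation framework already invoked in Corollary~\ref{rem:lambda}. Everything else is a direct consequence of results proved earlier in the paper, so the proof is short once these limit identifications are in place.
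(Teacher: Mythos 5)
Your proposal is correct and follows essentially the same route as the paper: the expansion in (i) is obtained by dividing the Perron-root expansion of Corollary~\ref{rem:lambda} by $\varepsilon$, the bounds come from the monotonicity in Lemma~\ref{mon-conv} together with the endpoint limits $\lim_{\mu\to 0^+} r = s(Q)=\max_i\{q_i\}$ and $\lim_{\mu\to\infty} r = \sum_i q_i\theta_i$, and (ii) follows from Theorem~\ref{thm:main}(ii). Your alternative one-line argument for (ii), namely $s(qI_n-\mu L)=q+\mu\,s(-L)=q$, is a slightly more elementary shortcut than the paper's appeal to the perturbation series, but it changes nothing essential.
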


\begin{proof}

\noindent (i) Recall that $r=\frac{\lambda}{\epsilon}$ and $\epsilon = \frac{1}{\mu}$. It follows from Corollary \ref{eq:lambda-new2} that 
\begin{equation*}
     r=\sum_{i=1}^n q_i \theta_i + \varepsilon \sum_{i=1}^n \sum_{j=1}^n q_i  \ell_{ij}^\#  q_j\theta_j + o(
    \varepsilon) = \sum_{i=1}^n q_i \theta_i + \dfrac{1}{\mu} \sum_{i=1}^n \sum_{j=1}^n q_i  \ell_{ij}^\#  q_j\theta_j + o\left(\frac{1}{\mu}
   \right).
\end{equation*}
By Lemma~\ref{mon-conv}, $r$ is non-increasing with respect to $\mu$. Thus, the sharp lower and upper bounds of $r$ occur as $\mu$ approaches $\infty$ and $0$, respectively. In particular, by~\eqref{s(j)}, the sharp lower bound is given by
$$\lim_{\mu \rightarrow 0}s(Q-\mu L)=s(Q)=\max_{
i}\{ q_i\},$$
and the sharp upper bound follows from~\eqref{spect-expand} as
$$\lim_{\mu \rightarrow \infty}\left(\sum_{i=1}^n q_i \theta_i + \frac{1}{\mu}\sum_{i=1}^n \sum_{j=1}^n q_i  \ell_{ij}^\#  q_j\theta_j + o\left(
    \frac{1}{\mu}\right)\right)=\sum_{i=1}^n q_i \theta_i.$$

\noindent (ii) Since $Q=q I_n$, it follows from Theorem~\ref{thm:main} that $\lambda= q\varepsilon $, and thus $r= q$.
\end{proof}

 The expansion presented in~\eqref{spect-expand} can be regarded as an aggregation of within-patch characteristics and between-patch connections. Specifically, the first term in the approximation for $r=s(Q - \mu L)$ represents the weighted average of patch dynamics, while the second term describes the deviations away from the average, which will be discussed in great detail in the following section.


\section{Network Heterogeneity Index: Definition and Properties}\label{sec:networkheterogenity}


Consider a heterogeneous network of $n$ patches with the irreducible Laplacian matrix $L$ defined as in~\eqref{eq:Lap-mat}. Let $\mathbf{q} = (q_1, \dots,q_n)^\top$ and $Q = \mathrm{diag}\{q_i\}:=\mathrm{diag}(\mathbf{q}) $ denote the vector and diagonal matrix that describe the dynamics of patch $i$ for $1\leq i \leq n$.
Define the \textit{network average} $\mathcal{A}$ as
   \begin{equation}\label{A}
    \mathcal{A}:= \sum_{i=1}^n q_i \theta_i =\mathbf{q}^\top \pmb{\theta} ,
\end{equation}
and the \textit{network heterogeneity index} $\mathcal{H}$ as
\begin{equation}\label{H}
    \mathcal{H}:= \sum_{i=1}^n \sum_{j=1}^n q_i  \ell_{ij}^\#  q_j\theta_j = \mathbf{q}^\top L^\# \mathrm{diag}(\pmb{\theta}) \mathbf{q} .
\end{equation}
Recall that $\pmb{\theta}=(\theta_1, \dots, \theta_n)^\top$ denotes the normalized right null vector of $L$, $\mathrm{diag}(\pmb{\theta})$ consists of $\theta_i$ on its diagonal, and $L^\#$ is the group inverse of $L$.


\begin{lemma}\label{rem:row-col-zero} Let $L^\#$ be the group inverse of an irreducible Laplacian matrix $L$ and $\pmb{\theta} =(\theta_1, \dots, \theta_n)^\top$ be its associated normalized right null vector. Then $L^\# \mathrm{diag}(\pmb{\theta})$ has all its row and column sum equal to zero, i.e., for  $1\leq i, j \leq n$,
\begin{equation}\label{eqn:row-col}
\sum_{i=1}^{n} \ell_{ij}^\# \theta_j
=\sum_{j=1}^{n} \ell_{ij}^\# \theta_j = 0 .
\end{equation}
\end{lemma}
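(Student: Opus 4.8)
The plan is to reduce both identities in~\eqref{eqn:row-col} to the two statements already established in Lemma~\ref{lem1}, namely that every column of $L^\#$ sums to zero and that $L^\#$ annihilates $\ker(L)$. The key observation is that the $(i,j)$ entry of the matrix $L^\#\mathrm{diag}(\pmb\theta)$ is exactly $\ell^\#_{ij}\theta_j$, so the two sums appearing in the statement are precisely the column sums and the $\theta$-weighted row sums of this single product matrix.

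First I would handle the column sum. Fixing $j$ and summing over $i$, I factor out the constant $\theta_j$ to get $\sum_{i=1}^n \ell^\#_{ij}\theta_j = \theta_j \sum_{i=1}^n \ell^\#_{ij}$. The inner sum is the $j$-th entry of the row vector $\mathbf{1}^\top L^\#$, which equals $\mathbf{0}$ by Lemma~\ref{lem1}(i). Hence every column sum of $L^\#\mathrm{diag}(\pmb\theta)$ vanishes, independent of the value of $\theta_j$.

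Next I would treat the row sum. Fixing $i$ and summing over $j$, I recognize $\sum_{j=1}^n \ell^\#_{ij}\theta_j$ as the $i$-th entry of the column vector $L^\#\pmb\theta$. Since $\pmb\theta$ is the right null vector of $L$, that is $\pmb\theta \in \ker(L)$, Lemma~\ref{lem1}(ii) gives $L^\#\pmb\theta = \mathbf{0}$. Therefore each row sum vanishes as well, and~\eqref{eqn:row-col} follows.

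I do not anticipate a genuine obstacle here; the entire content is bookkeeping against Lemma~\ref{lem1}. The one point requiring care is to keep the two indices straight: the column sum is a statement about $\mathbf{1}^\top L^\#$ (Lemma~\ref{lem1}(i)), from which the scalar $\theta_j$ can simply be pulled out, whereas the row sum is a statement about $L^\#\pmb\theta$ (Lemma~\ref{lem1}(ii)), in which the factor $\theta_j$ is an intrinsic part of the vector being annihilated rather than a constant to be factored out.
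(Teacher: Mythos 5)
Your proof is correct and is essentially the paper's own argument: the paper likewise derives the column sums from $\mathbf{1}^\top L^\#=\mathbf{0}$ (Lemma~\ref{lem1}(i)) and the row sums from $L^\#\pmb{\theta}=\mathbf{0}$ (Lemma~\ref{lem1}(ii)), merely stating both in matrix form as $\mathbf{1}^\top(L^\#\mathrm{diag}(\pmb{\theta}))=\mathbf{0}$ and $(L^\#\mathrm{diag}(\pmb{\theta}))\mathbf{1}=L^\#\pmb{\theta}=\mathbf{0}$. Your entrywise bookkeeping is just a more explicit rendering of the same two identities.
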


\begin{proof} It is evident from Lemma~\ref{lem1}(i)-(ii) that  
$$\mathbf{1}^\top (L^\# \mathrm{diag}(\pmb{\theta}))=0 \qquad \mathrm{and} \qquad (L^\# \mathrm{diag}(\pmb{\theta})) \mathbf{1}^\top =L^\# \pmb{\theta}=0,$$
giving the results required.
\end{proof}

The following result presents a variational formula for the network heterogeneity index $\mathcal{H}$, highlighting its biological significance in characterizing differences among various patches.

\begin{theorem}\label{new-H} Let the network heterogeneity index $\mathcal{H}$ be given as in~\eqref{H}. Then,  
\begin{equation}\label{eq-H2}
\mathcal{H} = -\frac{1}{2}\sum_{i=1}^n \sum_{j\not= i} \ell_{ij}^\# \theta_j(q_i-q_j)^2.
\end{equation}
\end{theorem}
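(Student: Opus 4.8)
The plan is to treat $\mathcal{H}$ as a quadratic form in $\mathbf{q}$ with coefficient matrix $L^\#\mathrm{diag}(\pmb{\theta})$, whose $(i,j)$ entry is $\ell_{ij}^\#\theta_j$, and to reduce it to the squared-difference form by a single polarization identity. Concretely, I would start from the definition $\mathcal{H}=\sum_{i=1}^n\sum_{j=1}^n \ell_{ij}^\#\theta_j\, q_i q_j$ and substitute the algebraic identity $q_i q_j=\tfrac12\bigl(q_i^2+q_j^2-(q_i-q_j)^2\bigr)$ inside the double sum. This splits $\mathcal{H}$ into three double sums: one with factor $q_i^2$, one with factor $q_j^2$, and the desired one with factor $(q_i-q_j)^2$.

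The key step is then to kill the two pure-square sums using Lemma~\ref{rem:row-col-zero}. For the $q_i^2$ term I would factor as $\tfrac12\sum_{i} q_i^2\bigl(\sum_{j}\ell_{ij}^\#\theta_j\bigr)$ and invoke the vanishing \emph{row} sums of $L^\#\mathrm{diag}(\pmb{\theta})$ (equivalently $L^\#\pmb{\theta}=\mathbf 0$ from Lemma~\ref{lem1}(ii)), so this term is $0$. Symmetrically, for the $q_j^2$ term I would factor as $\tfrac12\sum_{j} q_j^2\bigl(\sum_{i}\ell_{ij}^\#\theta_j\bigr)$ and use the vanishing \emph{column} sums (equivalently $\mathbf 1^\top L^\#=\mathbf 0$ from Lemma~\ref{lem1}(i)), so this term also vanishes. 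What survives is exactly $\mathcal{H}=-\tfrac12\sum_{i=1}^n\sum_{j=1}^n \ell_{ij}^\#\theta_j (q_i-q_j)^2$, and since the diagonal contributions $i=j$ carry the factor $(q_i-q_i)^2=0$, the inner sum may be restricted to $j\neq i$, which is precisely~\eqref{eq-H2}.

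The proof is essentially a bookkeeping computation, so there is no deep obstacle; the only delicate point is recognizing that \emph{both} the row-sum and the column-sum properties of $L^\#\mathrm{diag}(\pmb{\theta})$ are needed, one to annihilate each pure-square sum, and that it is Lemma~\ref{rem:row-col-zero} (rather than a symmetry of $L^\#$, which need not hold) that supplies them. I would flag that the restriction $j\neq i$ in the final expression is cosmetic, arising only because the diagonal terms vanish identically, and that the inequality $\mathcal{H}\ge 0$ advertised in the introduction does \emph{not} follow from this formula alone, since the signs of the off-diagonal entries $\ell_{ij}^\#$ are not controlled here; that nonnegativity would require a separate argument about the group inverse.
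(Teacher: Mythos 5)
Your proof is correct and follows essentially the same route as the paper's: both rest on the polarization identity $q_iq_j=\tfrac12\bigl(q_i^2+q_j^2-(q_i-q_j)^2\bigr)$ together with the vanishing row and column sums of $L^\#\mathrm{diag}(\pmb{\theta})$ from Lemma~\ref{rem:row-col-zero}. Your bookkeeping is in fact slightly cleaner (you keep the diagonal terms in the double sum and drop them only at the end, whereas the paper first splits off $\ell_{ii}^\#$ and re-expresses it via the column-sum condition), and your closing remarks about $j\neq i$ being cosmetic and about $\mathcal{H}\ge 0$ requiring a separate argument are both accurate.
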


\begin{proof} It follows from~\eqref{H} that
\begin{equation*}
\mathcal{H}=\sum_{i=1}^{n}\sum_{j=1}^{n}  q_i \ell_{ij}^\#  \theta_j q_j=\sum_{i=1}^{n}\ell_{ii}^\# \theta_i q_i^2  + \sum_{i=1}^{n}\sum_{j \neq i}q_i \ell_{ij}^\# \theta_j q_j.
\end{equation*}
By Lemma~\ref{lem1}(i), the column sum of $L^\#=(\ell^\#_{ij})$ is zero, that is $\ell^\#_{ii}= -\sum_{i \neq j}\ell^\#_{ji}$. Thus, the above equation becomes
\begin{gather}\label{eq:H-quad2}
\begin{aligned}\mathcal{H}= &- \sum_{i=1}^{n}\sum_{j \neq i} \ell^\#_{ji} \theta_i  q_i^2+ \sum_{i=1}^{n}\sum_{j \neq i}\ell_{ij}^\# \theta_j q_i q_j\\
&= -\dfrac{1}{2} \sum_{i=1}^{n}\sum_{j \neq i}(\ell^\#_{ji} \theta_i q_i^2  - \ell_{ij}^\# \theta_j q_i^2) -\dfrac{1}{2}\sum_{i=1}^{n}\sum_{j \neq i}\ell_{ij}^\# \theta_j(q_i^2 - 2q_i q_j + q_j^2) \\
& = -\dfrac{1}{2}\sum_{i=1}^{n}q_i^2\sum_{j \neq i}(\ell^\#_{ji} \theta_i  - \ell^\#_{ij}\theta_j) -\dfrac{1}{2} \sum_{i=1}^{n}\sum_{j \neq i} \ell_{ij}^\# \theta_j( q_i -q_j)^2.
\end{aligned}
\end{gather}
Notice that $\sum_{j \neq i} ( \ell_{ji}^\# \theta_i - \ell_{ij}^\# \theta_j) =0,$ which follows directly from  \eqref{eqn:row-col} in Lemma~\ref{rem:row-col-zero}. 
Thus, the expression for $\mathcal{H}$ in~\eqref{eq:H-quad2}  becomes
\begin{equation*}
\mathcal{H}=- \dfrac{1}{2}\sum_{i=1}^{n}\sum_{j \neq i} \ell^\#_{ij}(q_i - q_j)^2
\theta_j,
 \end{equation*}
establishing the result.
\end{proof}

Next, we show that $\mathcal{H}$ is non-negative despite the fact that $\ell_{ij}^\#$, for $i \neq j$, is not always a non-positive value. To achieve it, we utilize the following Lemma, which was inspired by the work of~\cite[Section 5]{deutsch1984derivatives}.

\begin{lemma}\label{thm:LU-semi-pos}
Let $L=(\ell_{ij})$ be an irreducible Laplacian matrix and $\pmb{\theta}=(\theta_1, \dots, \theta_n)^\top$ its normalized right null vector.  Then for every vector $\mathbf{y} =(y_1, \dots, y_n)^\top \in \mathbb{R}^n$,
\begin{equation}\label{LU-posi-semi}
\mathbf{y}^\top L \mathrm{diag}(\pmb{\theta}) \mathbf{y} \ge 0.
\end{equation}
\end{lemma}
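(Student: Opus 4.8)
The plan is to exploit the elementary fact that a quadratic form depends only on the symmetric part of its matrix. Writing $M = L\,\mathrm{diag}(\pmb\theta)$, so that $M_{ij} = \ell_{ij}\theta_j$, I have $\mathbf{y}^\top M \mathbf{y} = \mathbf{y}^\top S \mathbf{y}$ with $S = \tfrac12(M + M^\top)$, and the claim reduces to showing that the symmetric matrix $S$ is positive semidefinite. This symmetrization is the essential first move: $M$ itself is in general \emph{not} symmetric, so it cannot be treated as a graph Laplacian directly, whereas I will argue that $S$ can be.

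The second step is to verify that $S$ is in fact a symmetric graph Laplacian, i.e.\ it has nonpositive off-diagonal entries and vanishing row sums. Using $\ell_{ij} = -m_{ij}$ for $i\neq j$, the off-diagonal entries are $S_{ij} = -\tfrac12(m_{ij}\theta_j + m_{ji}\theta_i)$, which are nonpositive because all movement rates $m_{ij}$ and all entries $\theta_i$ are nonnegative. The diagonal entries are $S_{ii} = \ell_{ii}\theta_i$. For the row sums I would invoke both null-vector identities: reading off row $i$ of $L\pmb\theta = \mathbf{0}$ gives $\sum_{j\neq i} m_{ij}\theta_j = \ell_{ii}\theta_i$, while the definition of the diagonal entry in~\eqref{eq:Lap-mat} gives $\ell_{ii} = \sum_{j\neq i} m_{ji}$. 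Combining these, the row-$i$ sum $S_{ii} + \sum_{j\neq i} S_{ij}$ collapses to $\ell_{ii}\theta_i - \tfrac12\ell_{ii}\theta_i - \tfrac12\theta_i\ell_{ii} = 0$.

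Once $S$ is identified as a symmetric matrix with nonpositive off-diagonals and zero row sums, I would close the argument with the standard Laplacian identity $\mathbf{y}^\top S \mathbf{y} = -\tfrac12\sum_{i\neq j} S_{ij}(y_i - y_j)^2$, obtained by using the zero-row-sum relation $S_{ii} = -\sum_{j\neq i}S_{ij}$ to replace the diagonal terms. Substituting $-S_{ij} = \tfrac12(m_{ij}\theta_j + m_{ji}\theta_i)\ge 0$ yields the explicit nonnegative representation $\mathbf{y}^\top L\,\mathrm{diag}(\pmb\theta)\,\mathbf{y} = \tfrac12\sum_{i<j}(m_{ij}\theta_j + m_{ji}\theta_i)(y_i - y_j)^2 \ge 0$. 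I expect the main obstacle to be the row-sum computation, since that is the single place where the hypothesis that $\pmb\theta$ is the right null vector of $L$ is genuinely used; indeed, the inequality would fail for a generic positive diagonal scaling, so the proof must make that structural input explicit rather than appeal to positive semidefiniteness of $L$ alone.
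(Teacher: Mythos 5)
Your proof is correct and takes essentially the same route as the paper's: both arguments reduce the quadratic form to a sum of $(y_i-y_j)^2$ terms with nonnegative weights, using exactly the two structural inputs you identify, namely the zero column sums of $L$ and the null-vector identity $L\pmb{\theta}=\mathbf{0}$. The paper reaches the identity $\mathbf{y}^\top L\,\mathrm{diag}(\pmb{\theta})\,\mathbf{y} = -\tfrac{1}{2}\sum_{i}\sum_{j\neq i}\ell_{ij}\theta_j(y_i-y_j)^2$ by the index manipulation of Theorem~\ref{new-H} rather than by explicitly forming the symmetric part $S=\tfrac12(M+M^\top)$, but after pairing $(i,j)$ with $(j,i)$ the two representations coincide.
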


\begin{proof}
Similar to the argument used in the proof of Theorem~\ref{new-H}, it can be verified that
\begin{equation*}
\mathbf{y}^\top L\mathrm{diag}(\pmb{\theta}) \mathbf{y} =  -\dfrac{1}{2} \sum_{i=1}^{n}\sum_{j \neq i}\ell_{ij} \theta_j(y_i-y_j)^2.
\end{equation*}
Since $\ell_{ij}\le 0$, for $i \neq j$, and $\theta_i>0$ (see the definitions of $L$ and $\theta$ in Section~\ref{sec:preliminary}), the right side of the above equation is non-negative, and thus $ \mathbf{y}^\top L \mathrm{diag}(\pmb{\theta}) \mathbf{y} \geq 0$.   
\end{proof}

\begin{theorem}\label{thm-Lsharp-semipos}
    Let the network heterogeneity index $\mathcal{H}$ be given as in~\eqref{H}. Then $\mathcal{H}\geq 0$,
with the equality holding if and only if $\mathbf{q}$ is a scalar multiple of vector $\mathbf{1}$.
\end{theorem}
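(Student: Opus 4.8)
The obstacle to overcome is that the variational formula of Theorem~\ref{new-H} writes $\mathcal{H}=-\tfrac12\sum_{i}\sum_{j\neq i}\ell_{ij}^\#\theta_j(q_i-q_j)^2$, and since the off-diagonal entries $\ell_{ij}^\#$ of the group inverse need not be nonpositive (unlike the $\ell_{ij}$ of $L$ itself), the sign of this sum is not manifest. The plan is therefore to bypass this formula and instead rewrite $\mathcal{H}$ as a quadratic form of the type already shown to be nonnegative in Lemma~\ref{thm:LU-semi-pos}, namely $\mathbf{u}^\top L\,\mathrm{diag}(\pmb{\theta})\mathbf{u}\ge 0$, by means of a change of variables that transfers the $L^\#$ appearing in $\mathcal{H}$ back to $L$.

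Writing $M=\mathrm{diag}(\pmb{\theta})$ and $\mathbf{w}=L^\#M\mathbf{q}$, so that $\mathcal{H}=\mathbf{q}^\top\mathbf{w}$ by~\eqref{H}, I would first record two facts. The first, $\mathbf{1}^\top\mathbf{w}=\mathbf{0}$, is immediate from Lemma~\ref{lem1}(i). The second, $L\mathbf{w}=M\mathbf{q}-\mathcal{A}\pmb{\theta}$, follows from $LL^\#=I_n-\pmb{\theta}\mathbf{1}^\top$ in~\eqref{LL-sharp} together with $\mathbf{1}^\top M\mathbf{q}=\pmb{\theta}^\top\mathbf{q}=\mathcal{A}$ from~\eqref{A}. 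Solving the latter for $\mathbf{q}=M^{-1}L\mathbf{w}+\mathcal{A}\mathbf{1}$ (using $M^{-1}\pmb{\theta}=\mathbf{1}$) and substituting into $\mathcal{H}=\mathbf{q}^\top\mathbf{w}$, the term $\mathcal{A}\,\mathbf{1}^\top\mathbf{w}$ drops out because $\mathbf{1}^\top\mathbf{w}=0$, leaving $\mathcal{H}=\mathbf{w}^\top M^{-1}L\mathbf{w}$. Setting $\mathbf{u}=M^{-1}\mathbf{w}$ then gives $\mathcal{H}=\mathbf{u}^\top LM\mathbf{u}=\mathbf{u}^\top L\,\mathrm{diag}(\pmb{\theta})\mathbf{u}$, which is nonnegative by Lemma~\ref{thm:LU-semi-pos}.

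For the equality case I would invoke the identity underlying the proof of Lemma~\ref{thm:LU-semi-pos}, namely $\mathbf{u}^\top L\,\mathrm{diag}(\pmb{\theta})\mathbf{u}=-\tfrac12\sum_i\sum_{j\neq i}\ell_{ij}\theta_j(u_i-u_j)^2$, whose summands are individually nonnegative since $\ell_{ij}\le 0$ and $\theta_j>0$. Hence $\mathcal{H}=0$ forces $u_i=u_j$ whenever $\ell_{ij}\neq 0$; irreducibility of $L$ (strong connectivity of the network) propagates this across the whole graph, so $\mathbf{u}=c\mathbf{1}$ for some scalar $c$. Since then $\mathbf{w}=M\mathbf{u}=c\pmb{\theta}$ while $\mathbf{1}^\top\mathbf{w}=0$ and $\mathbf{1}^\top\pmb{\theta}=1$, we must have $c=0$, i.e.\ $\mathbf{w}=L^\#M\mathbf{q}=\mathbf{0}$; equivalently $M\mathbf{q}\in\ker(L)=\mathrm{span}(\pmb{\theta})$, which (as each $\theta_i>0$) holds exactly when $\mathbf{q}$ is a scalar multiple of $\mathbf{1}$, the converse being clear. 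The main difficulty I anticipate is arranging the change of variables so the spurious $\mathcal{A}$-term cancels cleanly; the whole reduction hinges on the pair of facts $\mathbf{1}^\top\mathbf{w}=0$ and $L\mathbf{w}=M\mathbf{q}-\mathcal{A}\pmb{\theta}$, after which both nonnegativity and the equality condition are inherited directly from Lemma~\ref{thm:LU-semi-pos}.
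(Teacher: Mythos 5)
Your proof is correct, and its core reduction is the same as the paper's: you rewrite $\mathcal{H}$ as the quadratic form $\mathbf{u}^\top L\,\mathrm{diag}(\pmb{\theta})\mathbf{u}$ with $\mathbf{u}=\mathrm{diag}(\pmb{\theta})^{-1}L^\#\mathrm{diag}(\pmb{\theta})\mathbf{q}$ (the paper's vector $\mathbf{y}$) and invoke Lemma~\ref{thm:LU-semi-pos}; the paper verifies the identity $\mathcal{H}=\mathbf{y}^\top\mathrm{diag}(\pmb{\theta})L^\top\mathbf{y}$ by direct expansion using~\eqref{LL-sharp}, whereas you derive it from the pair of relations $\mathbf{1}^\top\mathbf{w}=0$ and $L\mathbf{w}=\mathrm{diag}(\pmb{\theta})\mathbf{q}-\mathcal{A}\pmb{\theta}$, which is a tidier route to the same place. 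The genuine divergence is in the equality case. The paper symmetrizes, argues that $\mathrm{diag}(\pmb{\theta})L^\top+L\,\mathrm{diag}(\pmb{\theta})$ is a symmetric irreducible Laplacian to get $\mathbf{y}=\alpha\mathbf{1}$, and then eliminates $\alpha$ by multiplying the defining equation of $\mathbf{y}$ by $L^2\mathrm{diag}(\pmb{\theta})$. You instead read off from the sum-of-squares form that $u_i=u_j$ across every edge, propagate by irreducibility to get $\mathbf{u}=c\mathbf{1}$, and then use the normalization $\mathbf{1}^\top\mathbf{w}=\pmb{\theta}^\top\mathbf{u}=0$ to force $c=0$, hence $L^\#\mathrm{diag}(\pmb{\theta})\mathbf{q}=\mathbf{0}$. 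This is more elementary (no auxiliary symmetrized Laplacian, no $L^2$ manipulation) and pins down the constant rather than eliminating it. The only step you should make explicit is the passage from $L^\#\mathrm{diag}(\pmb{\theta})\mathbf{q}=\mathbf{0}$ to $\mathrm{diag}(\pmb{\theta})\mathbf{q}\in\ker(L)$: this needs $\ker(L^\#)=\ker(L)$, which follows in one line from~\eqref{LL-sharp}, since $L^\#\mathbf{x}=\mathbf{0}$ gives $(I_n-\pmb{\theta}\mathbf{1}^\top)\mathbf{x}=LL^\#\mathbf{x}=\mathbf{0}$, i.e.\ $\mathbf{x}\in\mathrm{span}(\pmb{\theta})$. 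With that line added, the argument is complete.
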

\begin{proof}
Let vector $\mathbf{y} \in \mathbb{R}^n$ be defined as
\begin{equation}\label{eq:y}
  \mathbf{y}=\mathrm{diag}(\pmb{\theta})^{-1} L^\# \mathrm{diag}(\pmb{\theta}) \mathbf{q}.
\end{equation}
We claim that
        \begin{equation}\label{new-h1}
            \mathcal{H} = \mathbf{q}^\top L^\#\mathrm{diag}(\pmb{\theta}) \mathbf{q} = \mathbf{y}^\top \mathrm{diag}(\pmb{\theta}) L^\top \mathbf{y},
        \end{equation} 
where $L^\#$ is the group inverse of Laplacian matrix $L$. 
It follows from~\eqref{LL-sharp} and~\eqref{eq:y} that
\begin{align*}
\mathbf{y}^\top \mathrm{diag}(\pmb{\theta}) L^\top \mathbf{y}&=\mathbf{q}^\top\mathrm{diag}(\pmb{\theta}) (L^\#)^\top \mathrm{diag}(\pmb{\theta})^{-1}\mathrm{diag}(\pmb{\theta})L^\top \mathrm{diag}(\pmb{\theta})^{-1}L^\# \mathrm{diag}(\pmb{\theta}) \mathbf{q}\\
 &=\mathbf{q}^\top \mathrm{diag}(\pmb{\theta}) (LL^\#)^\top \mathrm{diag}(\pmb{\theta})^{-1}L^\#\mathrm{diag}(\pmb{\theta}) \mathbf{q}\\
     &   = \mathbf{q}^\top \mathrm{diag}(\pmb{\theta}) (I_n - \mathbf{1}\pmb{\theta}^\top) \mathrm{diag}(\pmb{\theta})^{-1}L^\#\mathrm{diag}(\pmb{\theta}) \mathbf{q}\\
        & = \mathbf{q}^\top L^\#\mathrm{diag}(\pmb{\theta}) \mathbf{q} - \mathbf{q}^\top \mathrm{diag}(\pmb{\theta}) \mathbf{1}\pmb{\theta}^\top \mathrm{diag}(\pmb{\theta})^{-1}L^\#\mathrm{diag}(\pmb{\theta}) \mathbf{q}\\
        &= \mathbf{q}^\top L^\# \mathrm{diag}(\pmb{\theta}) \mathbf{q} - \mathbf{q}^\top \mathrm{diag}(\pmb{\theta}) \mathbf{1}\mathbf{1}^\top L^\#\mathrm{diag}(\pmb{\theta})\mathbf{q}
\end{align*}
Since $\mathbf{1}^\top L^\# =0$, by Lemma~\ref{lem1}(ii), the second term on the right side vanishes, and thus~\eqref{new-h1} holds. Now, it follows  Lemma~\ref{thm:LU-semi-pos} that $\mathcal{H}\geq 0$.

It is evident from Theorem~\ref{new-H} that if $\mathbf{q}$ is a scalar multiple of vector $\mathbf{1}$, i.e., $q_i=q_j$ for $1 \leq i, j \leq n$, then $\mathcal{H}= 0$. Thus, to complete the proof, we need to show  $\mathcal{H}=0$ implies $\mathbf{q}$ is a scalar multiple of $\mathbf{1}$.
Since $\mathcal{H}=\mathbf{y}^\top \mathrm{diag}(\pmb{\theta}) L^\top \mathbf{y}=0$, by~\eqref{new-h1}, thus
$$0= \mathbf{y}^\top \mathrm{diag}(\pmb{\theta}) L^\top \mathbf{y}+ \mathbf{y}^\top L\mathrm{diag}(\pmb{\theta}) \mathbf{y}=\mathbf{y}^\top[\mathrm{diag}(\pmb{\theta}) L^\top+L\mathrm{diag}(\pmb{\theta})]\mathbf{y}.$$
    
It can be verified that $L\mathrm{diag}(\pmb{\theta})$ is an irreducible Laplacian matrix, thus $\mathrm{diag}(\pmb{\theta}) L^\top+L \mathrm{diag}(\pmb{\theta})$ becomes a symmetric irreducible  Laplacian matrix. Hence, $\mathbf{y} = \alpha \mathbf{1} \in \mathrm{ker}(\mathrm{diag}(\pmb{\theta}) L^\top+L\mathrm{diag}(\pmb{\theta}))$ for some $\alpha \in \mathbbm{R}$. It follows from~\eqref{eq:y} that
\begin{equation*}
\mathrm{diag}(\pmb{\theta})^{-1} L^\# \mathrm{diag}(\pmb{\theta}) \mathbf{q}= \alpha \mathbf{1}.
\end{equation*}
Multiplying by $L^2 \mathrm{diag}(\pmb{\theta})$ from the left side gives
$$L^2 L^\# \mathrm{diag}(\pmb{\theta})\mathbf{q}= a L^2 \mathrm{diag}(\pmb{\theta}) \mathbf{1}.$$
Applying~\eqref{groupinverse-prop} to the above equation and following $\mathrm{diag}(\pmb{\theta}) \mathbf{1}=\pmb{\theta}$ and $L \pmb{\theta}=\mathbf{0}$ result in
$$L \mathrm{diag}(\pmb{\theta}) \mathbf{q}= \alpha L^2 \pmb{\theta}= \alpha L(L \pmb{\theta})=\mathbf{0}.$$
This implies that $ \mathrm{diag}(\pmb{\theta}) \mathbf{q} \in \mathrm{ker}(L)$ is a right null vector of $L$, i.e., $\mathrm{diag}(\pmb{\theta}) \mathbf{q} = \beta \pmb{\theta}$ for some $ \beta \in \mathbbm{R}$. Hence, 
$\mathbf{q}=\beta \mathrm{diag}(\pmb{\theta})^{-1} \pmb{\theta}=\beta \mathbf{1}$ is a scalar multiple of vector $\mathbf{1}$. 
\end{proof}

The following result can be used to compare different Perron root values $r$ of different network structures and resource locations, in which they have the same network average value. Specifically, $r$ in~\eqref{spect-expand} can be rewritten in terms of $\mathcal{A}$ and $\mathcal{H}$ defined in~\eqref{A} and~\eqref{H} as 
\begin{equation}\label{Perron-A-H}
r = s(Q - \mu L) = \mathcal{A}+\frac{1}{\mu}\mathcal{H}+o\left(\frac{1}{\mu}\right).
\end{equation}
\begin{theorem}\label{thm-order-det}
Let $r^{(1)} = \mathcal{A}^{(1)} + \dfrac{1}{\mu} \mathcal{H}^{(1)}+ o\left( \dfrac{1}{\mu}\right)$ and $r^{(2)} = \mathcal{A}^{(2)} + \dfrac{1}{\mu} \mathcal{H}^{(2)}+ o\left( \dfrac{1}{\mu}\right)$. Suppose $\mathcal{A}^{(1)}=\mathcal{A}^{(2)}$
 and $\mathcal{H}^{(1)}>\mathcal{H}^{(2)}$. Then for sufficiently large value of $\mu>0$, $r^{(1)}>r^{(2)}$.
\end{theorem}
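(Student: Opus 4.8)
The plan is to reduce the claim to an elementary asymptotic comparison, exploiting the fact that the two expansions agree at leading order. First I would subtract the two expansions term by term. Since $\mathcal{A}^{(1)}=\mathcal{A}^{(2)}$ by hypothesis, the network-average contributions cancel, leaving
$$r^{(1)}-r^{(2)} = \frac{1}{\mu}\bigl(\mathcal{H}^{(1)}-\mathcal{H}^{(2)}\bigr) + o\!\left(\frac{1}{\mu}\right).$$
Writing $\delta := \mathcal{H}^{(1)}-\mathcal{H}^{(2)}$, the hypothesis $\mathcal{H}^{(1)}>\mathcal{H}^{(2)}$ gives $\delta>0$, so the displayed identity says that the sign of $r^{(1)}-r^{(2)}$ is governed, for large $\mu$, by the positive constant $\delta$ divided by $\mu$.

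Next I would make the little-$o$ bookkeeping precise. Multiplying through by $\mu>0$ preserves signs, so it suffices to examine $\mu\bigl(r^{(1)}-r^{(2)}\bigr) = \delta + \mu\cdot o(1/\mu)$. By the definition of the remainder, $\mu\cdot o(1/\mu)\to 0$ as $\mu\to\infty$, hence $\mu\bigl(r^{(1)}-r^{(2)}\bigr)\to\delta>0$. By the definition of a limit there exists $\mu_0>0$ such that for every $\mu>\mu_0$ the quantity $\mu\bigl(r^{(1)}-r^{(2)}\bigr)$ stays within $\delta/2$ of $\delta$, and in particular exceeds $\delta/2>0$. Dividing back by $\mu>0$ yields $r^{(1)}-r^{(2)}>\delta/(2\mu)>0$, that is, $r^{(1)}>r^{(2)}$ for all $\mu>\mu_0$, which is exactly the assertion.

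There is essentially no deep obstacle here; the content is purely the observation that equal network averages force the heterogeneity index to act as the tie-breaking, first-order term. The only point requiring a little care is the handling of the two separate $o(1/\mu)$ remainders: strictly they arise from two different expansions, so I would first combine them into a single $o(1/\mu)$ term (a difference of two $o(1/\mu)$ quantities is again $o(1/\mu)$) before multiplying by $\mu$. I would also remark that the threshold $\mu_0$ depends on the expansions, through the implicit remainder bounds, and on $\delta$, which is precisely why the conclusion is asserted only for sufficiently large $\mu$; no uniformity over the underlying network data is claimed or needed.
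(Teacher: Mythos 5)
Your proposal is correct and follows essentially the same route as the paper's proof: cancel the equal network averages, factor the difference as $\frac{1}{\mu}\bigl(\mathcal{H}^{(1)}-\mathcal{H}^{(2)}+o(1)\bigr)$, and conclude positivity for sufficiently large $\mu$. Your version merely spells out the limit argument and the combination of the two remainder terms in more detail than the paper does.
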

\begin{proof} 
Let $h = \mathcal{H}^{(1)} - \mathcal{H}^{(2)}>0$. Then,
$ r^{(1)} - r^{(2)} = \dfrac{1}{\mu} (\mathcal{H}^{(1)}-\mathcal{H}^{(2)})+ o\left(\dfrac{1}{\mu} \right)=\dfrac{1}{\mu}(h+o(1))>0,
$
for sufficiently large value of $\mu>0$.
\end{proof}
In the next two sections, we highlight the importance of the network heterogeneity index $\mathcal{H}$ in determining the spread of an infectious disease as well as the evolution of population of a species in heterogeneous environments of connected regions by presenting applications from epidemiology and ecology.

Throughout these examples, we assume symmetric movements between the patches. However, it is worth noticing that the network heterogeneity index $\mathcal{H}$ plays an equally important role even in networks with asymmetric movements.

\section{Application to a Multi-Patch SIS Model}\label{sec:application} 



In this section, we examine the impact of human transportation on disease transmission within a multi-patch SIS (susceptible-infected-susceptible) model. We consider two network structures, i.e., the star and path networks, and evaluate $\mathcal{H}$ under various hot spot scenarios to illustrate how different network structures and hot spot placements can affect disease outcomes.

Consider the $n$-patch SIS epidemic model ($n \ge 3$) given in~\cite{allen2007asymptotic}, in which both susceptible and infected individuals can move between the patches, as follow
 \begin{equation}\label{eq-SIS}
\left\{\begin{array}{rcl}
S_i' & = &\displaystyle - \frac{\beta_i S_i I_i}{S_i+I_i} +\gamma_i I_i + \mu_S \sum_{j=1}^n (m_{ij} S_j - m_{ji} S_i), \\
I_i' & = &\displaystyle \frac{\beta_i S_i I_i}{S_i+I_i} -\gamma_i I_i +\mu_I \sum_{j=1}^n (m_{ij} I_j - m_{ji} I_i). 
\end{array}
\right.
\quad i=1, 2, \ldots, n,
\end{equation}
Here, $S_i$ and $I_i$ denote the number of susceptible and infected individuals in patch $i$; $\beta_i\ge 0$ and $\gamma_i \ge 0$ are the disease transmission rate and recovery rate in patch $i$; $\mu_S\ge 0$ and $\mu_I\ge 0
$ denote the dispersal coefficients of the susceptible and infected individuals.~Finally, $m_{ij}\ge 0$ represents the movement weight from patch $j$ to patch $i$. We assume that the movement between the patches is symmetric. Thus, the Laplacian matrix $L$ associated to the movement, as defined in~\eqref{eq:Lap-mat}, is irreducible and symmetric with $\pmb{\theta}=(\frac{1}{n},\dots, \frac{1}{n})^\top$ being 
 its normalized right null vector. The Jacobian matrix $J$ of system~\eqref{eq-SIS} at the disease-free equilibrium is given by
\begin{equation}\label{eq:J-SIS}
J = \mathrm{diag}\{\beta_i - \gamma_i\}- \mu_I L,
\end{equation}
where $\beta_i-\gamma_i$ denotes the disease growth rate of patch $i$ when isolated from other patches, that is when $\mu_I =0$.  Patch $i$ is called a hot spot if $\beta_i - \gamma_i>0$, and a non-hotspot if $\beta_i - \gamma_i \leq 0$.
As stated in~\cite{diekmann2010construction,van2002reproduction}, the disease growth rate is determined by the sign of $s(J)$, namely, the disease dies out if $s(J)<0$, and persists if $s(J)>0$.
Since $J$ is irreducible with its off-diagonal entries being non-negative, by the Perron Frobenius theory the network disease growth rate of~\eqref{eq-SIS}, denoted as $r=s(J)$, is the Perron root of $J$. It now follows from~\eqref{spect-expand} and~\eqref{Perron-A-H} that
\begin{equation}\label{SIS-diseasegrowth}
r = \mathcal{A}+\frac{1}{\mu_I}\mathcal{H}+ o\left( \frac{1}{\mu_I}\right)=\sum_{i=1}^n \frac{(\beta_i - \gamma_i) }{n} + \frac{1}{ \mu_I }\sum_{i=1}^n \sum_{j=1}^n\frac{ (\beta_i - \gamma_i)\ell_{ij}^\# (\beta_j - \gamma_j)}{n}+ o\left( \frac{1}{\mu_I}\right),
\end{equation}
where $L^\#=(\ell^\#_{ij})$ denotes the group inverse matrix of $L$.

In what follows, we consider two distinct network structures, namely an n-patch star network and a 4-patch path network, in the SIS model in~\eqref{eq-SIS}.  To demonstrate the important role played by the network heterogeneity index $\mathcal{H}$ in determining disease outbreaks, we consider various scenarios of hot spot locations in each network structure and compare the corresponding $\mathcal{H}$ values.

\medskip
\begin{remark}\label{nonhot-hot} 
 In the following network structures, we denote the disease growth rate of hot spot patches as $q^{(h)}$ and non-hot spot ones as $q^{(\ell)}$.
\end{remark}
\medskip

The following result, which will also be used in Section~\ref{sec:application2}, formulates the network heterogeneity index $\mathcal{H}$ for networks with symmetric movements and one hot spot patch.

\begin{theorem}\label{thm-1different}
Let $L$ be an $n \times n$ irreducible and symmetric Laplacian matrix, and $Q=\mathrm{diag}({\mathbf q})=\mathrm{diag}\{q_i\}$ such that for $1\leq i \leq n$,
 $q_i =q^{(\ell)} $ if $i \neq k$, and $q_i=q^{(h)}$ if $i=k$. Then, the network heterogeneity index $\mathcal{H}$ in~\eqref{new-H} becomes
\begin{equation}\label{eq:H-one}
\mathcal{H}=\dfrac{\ell^\#_{kk}(q^{(h)} -q^{(\ell)} )^2}{n},
\end{equation}
where $L^\#=(\ell^\#_{ij})$ denotes the group inverse matrix of $L$.
\end{theorem}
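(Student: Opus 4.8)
The plan is to start from the variational representation of $\mathcal{H}$ established in Theorem~\ref{new-H}, namely
$$\mathcal{H} = -\frac{1}{2}\sum_{i=1}^n \sum_{j \neq i} \ell^\#_{ij}\theta_j (q_i - q_j)^2,$$
rather than the defining double sum in~\eqref{H}, since the squared-difference form immediately exploits the fact that $\mathbf{q}$ takes only two distinct values. Because $L$ is symmetric, its normalized right null vector is $\pmb{\theta} = (\tfrac{1}{n}, \dots, \tfrac{1}{n})^\top$, so $\theta_j = \tfrac{1}{n}$ for every $j$ and the formula reduces to $\mathcal{H} = -\tfrac{1}{2n}\sum_{i}\sum_{j\neq i}\ell^\#_{ij}(q_i - q_j)^2$.

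First I would observe that $(q_i - q_j)^2$ vanishes unless exactly one of the indices equals $k$: if neither is $k$ then $q_i = q_j = q^{(\ell)}$, and the term $i = j = k$ is excluded by $j \neq i$. Hence only two families of terms survive, those with $i = k$, $j \neq k$, and those with $j = k$, $i \neq k$, and on each of them $(q_i - q_j)^2 = (q^{(h)} - q^{(\ell)})^2$. Factoring this constant out leaves
$$\mathcal{H} = -\frac{(q^{(h)}-q^{(\ell)})^2}{2n}\left(\sum_{j \neq k}\ell^\#_{kj} + \sum_{i \neq k}\ell^\#_{ik}\right).$$

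Next I would evaluate the two inner sums. The column-sum identity $\mathbf{1}^\top L^\# = \mathbf{0}$ from Lemma~\ref{lem1}(i) gives $\sum_{i\neq k}\ell^\#_{ik} = -\ell^\#_{kk}$, while the row-sum identity from Lemma~\ref{rem:row-col-zero}, specialized to the uniform $\pmb{\theta}$ so that $\sum_j \ell^\#_{kj}\theta_j = 0$ becomes $\sum_j \ell^\#_{kj} = 0$, gives $\sum_{j\neq k}\ell^\#_{kj} = -\ell^\#_{kk}$. Substituting these two values turns the bracket into $-2\ell^\#_{kk}$, and the two minus signs together with the factor $\tfrac{1}{2}$ cancel to yield exactly $\mathcal{H} = \ell^\#_{kk}(q^{(h)}-q^{(\ell)})^2/n$.

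The computation is essentially bookkeeping, so there is no serious obstacle; the one point requiring care is that both the row sum and the column sum of $L^\#$ are needed, and the vanishing of the row sum relies on the uniform null vector, equivalently on the symmetry of $L$, which also renders $L^\#$ symmetric so that the two surviving families of terms are genuinely equal rather than merely summing to the same value. I would therefore be careful to invoke Lemma~\ref{rem:row-col-zero} for the row sum rather than Lemma~\ref{lem1}(i), which only delivers the column sum directly.
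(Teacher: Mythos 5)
Your proof is correct, but it takes a genuinely different route from the paper's. The paper works directly with the quadratic form $\mathcal{H}=\tfrac{1}{n}\mathbf{q}^\top L^\#\mathbf{q}$ and decomposes $\mathbf{q}=q^{(\ell)}\mathbf{1}+(q^{(h)}-q^{(\ell)})\mathbf{e}_k$; the cross terms are then annihilated by the two identities $\mathbf{1}^\top L^\#=\mathbf{0}$ and $L^\#\mathbf{1}=\mathbf{0}$ (the latter from Lemma~\ref{lem1}(ii), since $\pmb{\theta}=\tfrac{1}{n}\mathbf{1}\in\ker(L)$), leaving only $\tfrac{1}{n}(q^{(h)}-q^{(\ell)})^2\,\mathbf{e}_k^\top L^\#\mathbf{e}_k=\tfrac{1}{n}\ell^\#_{kk}(q^{(h)}-q^{(\ell)})^2$. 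You instead start from the variational formula of Theorem~\ref{new-H}, observe that only terms with exactly one index equal to $k$ survive, and then evaluate the resulting row and column sums of $L^\#$; your bookkeeping is right, and you correctly identify that the row-sum identity comes from Lemma~\ref{rem:row-col-zero} specialized to the uniform null vector (equivalently from $L^\#\pmb{\theta}=\mathbf{0}$), not from Lemma~\ref{lem1}(i) alone. The two arguments are of comparable length and invoke the same underlying facts in different packaging. What the paper's rank-one decomposition buys is that it extends verbatim to the two-hot-spot case (Theorem~\ref{thm-2different}), where $\mathbf{q}=\tilde q\,\mathbf{1}+(\check q-\tilde q)(\mathbf{e}_k+\mathbf{e}_\ell)$ produces the four-entry sum $\ell^\#_{kk}+\ell^\#_{k\ell}+\ell^\#_{\ell k}+\ell^\#_{\ell\ell}$ with no extra work; your squared-difference route would require slightly more careful case analysis there, since terms with both indices in $\{k,\ell\}$ no longer vanish uniformly. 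What your route buys is that it makes manifest from the outset that $\mathcal{H}$ depends only on the difference $q^{(h)}-q^{(\ell)}$, without needing to verify that the cross terms cancel.
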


\begin{proof}
Since $L$ is symmetric, $\pmb{\theta}=(\frac{1}{n}, \dots,\frac{1}{n})^\top=\frac{1}{n}\mathbf{1}$ is its normalized right null vector, and $\mathcal{H}$ in~\eqref{new-H} is given by
\begin{equation}\label{eq:H1-L-sym}
\mathcal{H}=\dfrac{1}{n}\mathbf{q}^\top L^\#\mathbf{q}. 
\end{equation}
Additionally, by the assumption, $\mathbf{q}$ can be rewritten as\begin{equation}\label{eq:q1-L-sym}\mathbf{q}=(q^{(\ell)},\dots,q^{(\ell)},\underbrace{q^{(h)}}_{\text{$k^{\mathrm{th}}$}},q^{(\ell)} \dots, q^{(\ell)})^\top
=q^{(\ell)}\mathbf{1} +(q^{(h)}-q^{(\ell)})\mathbf{e}_k,
\end{equation}
where $\mathbf{e}_k$ denotes the standard $ n \times 1$ unit vector with one in the $k^{\mathrm{th}}$ entry and zeros elsewhere, given as
\begin{equation}\label{eq:unit-vector}
\mathbf{e}_k=(0,\dots,0,\underbrace{1 }_{\text{$k^{\mathrm{th}}$}},0 \dots, 0)^\top.
\end{equation}
Substituting~\eqref{eq:q1-L-sym} into~\eqref{eq:H1-L-sym} yields
\begin{align*}
\mathcal{H}&=\dfrac{1}{n}(q^{(\ell)} \mathbf{1}^\top+(q^{(h)}-q^{(\ell)})\mathbf{e}_k^\top) L^\# ( \mathbf{1}q^{(\ell)} +\mathbf{e}_k(q^{(h)}-q^{(\ell)}))\\
&=\dfrac{1}{n}q^{(\ell)}  \mathbf{1}^\top L^\# (\mathbf{1}q^{(\ell)}  +(q^{(h)}-q^{(\ell)})\mathbf{e}_k)+\dfrac{1}{n}(q^{(h)}-q^{(\ell)})\mathbf{e}_k^\top L^\#   \mathbf{1}q^{(\ell)}\mathbf{e}_k\\
&+\dfrac{1}{n}(q^{(h)}-q^{(\ell)})\mathbf{e}_k^\top L^\# \mathbf{e}_k(q^{(h)}-q^{(\ell)}).
       \end{align*}      
\noindent Lemma~\ref{lem1}(i)-(ii) along with $\pmb{\theta}=\frac{1}{n}\mathbf{1}$, give $\mathbf{1}^\top L^\#=0$ and  $L^\#\mathbf{1}=0$. Thus, the first two terms in the above equation vanish, resulting in
$\mathcal{H}=\dfrac{1}{n}(q^{(h)}-q^{(\ell)})\mathbf{e}_k^\top L^\# \mathbf{e}_k(q^{(h)}-q^{(\ell)})=\dfrac{\ell^\#_{kk}(q^{(h)} -q^{(\ell)})^2}{n}.$
\end{proof}

\subsection{Star network with one hot spot} 
Consider the SIS model given in~\eqref{eq-SIS} over an $n$-patch star network, for $n\geq 3$, with patch $1$ at the hub and patches $2,3, \dots, n$ on the leaves as shown in Figure~\ref{fig:star1}(\subref{star}). Assume the weight between the hub and each leaf is $1$. The corresponding Laplacian matrix $L$ is shown in Figure~\ref{fig:star1}(\subref{lap-star}).
\begin{figure}[H]
 \centering
  \begin{subfigure}[b]{0.4\textwidth}
  \centering
  \begin{tikzpicture}
\begin{scope}[every node/.style={circle,  thick,draw}]
    \node [circle, fill= white](A) at (0,0) {$1$};
    \node [circle, fill=white](B) at (1.2,0) {$2$};
    \node [circle, fill=white](C) at (0,-1.2) {$4$};
    \node [circle, fill=white](D) at (-1.2,0) {$6$};
    \node [draw=white](E) at (0,1.2) {$\cdots$};
     \node [circle, fill=white](F) at (0.9,0.9) {$n$};
      \node [circle, fill=white](G) at (0.9,-0.9) {$3$};
       \node [circle, fill=white](H) at (-0.9,0.9) {$7$};
        \node [circle, fill=white](I) at (-0.9,-0.9) {$5$};
    \draw [<->, thick, draw= blue](A) -- (B);
    \draw [<->, thick, draw= blue](A) -- (C);
    \draw [<->, thick, draw= blue](A) -- (D);
    \draw [<->, thick, draw= blue](A) -- (E);
    \draw [<->, thick, draw= blue](A) -- (F);
    \draw [<->, thick, draw= blue](A) -- (G);
    \draw [<->, thick, draw= blue](A) -- (H);
    \draw [<->, thick, draw= blue](A) -- (I);
\end{scope}
\end{tikzpicture}
  \caption{}
  \label{star}
  \end{subfigure}
  \centering
   \begin{subfigure}[b]{0.4\textwidth}
  \centering
$L = \begin{pmatrix} n-1&-1&-1&\cdots&-1\\[1pt]
-1&1&0&\cdots&0\\[1pt]
-1&0&1&\cdots&0\\[1pt]
\vdots&\vdots&\vdots&\ddots&\vdots\\[1pt]
-1&0&0&\cdots&1
\end{pmatrix}$
  \caption{}
  \label{lap-star}
  \end{subfigure}
\caption{\textbf({\subref{star})} Flow diagram of an n-patch star network. {\textbf({\subref{lap-star})}} An associated Laplacian matrix $1$.}
\label{fig:star1}
\end{figure}
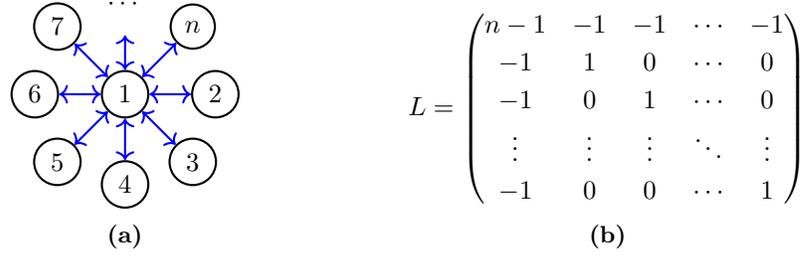
\noindent Following~\eqref{eq:Lsharp-partition}, the group inverse matrix $L^\#$ associated to $L$ is given by
\begin{equation}\label{Lsharp-star}
L^\#= \dfrac{n-1}{n^2} \mathcal{J} + \frac{1}{n}\begin{pmatrix}
0&-1&-1&\cdots&\cdots&\cdots&-1\\
-1&n-2&-2&\cdots&\cdots&\cdots&-2\\
-1&-2&n-2&-2&\cdots&\cdots&-2\\
-1&-2&-2&n-2&-2&\cdots&-2\\
\vdots&\vdots&\vdots&\vdots&\vdots&\vdots&\vdots\\
-1&-2&-2&-2&\cdots&-2&n-2\\
\end{pmatrix},
\end{equation}
where $\mathcal{J}=(1)_{n \times n}$
denotes the all-one matrix.

Next, we investigate the invasion of an infectious disease within the star network in Figure~\ref{fig:star1}(\subref{star}) by exploring two scenarios of one hot spot patch in the network: at the hub, Figure~\ref{fig:star}(\subref{star1}), and on a leaf, Figure~\ref{fig:star}(\subref{star2}).
 
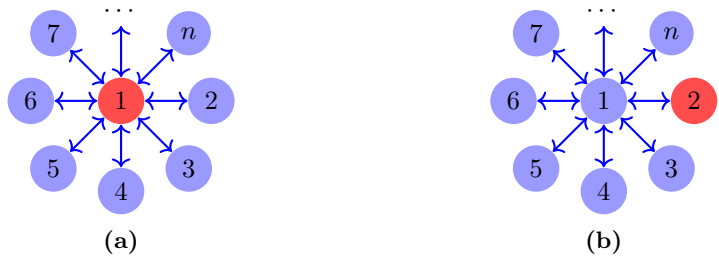
\begin{figure}[H]
 \centering
   \begin{subfigure}[t]{0.4\textwidth}
  \centering
  \begin{tikzpicture}
\begin{scope}
    \node [circle, fill= red!70](A) at (0,0) {$1$};
    \node [circle, fill=blue!40](B) at (1.2,0) {$2$};
    \node [circle, fill=blue!40](C) at (0,-1.2) {$4$};
    \node [circle, fill=blue!40](D) at (-1.2,0) {$6$};
    \node [draw=white](E) at (0,1.2) {$\cdots$};
     \node [circle, fill=blue!40](F) at (0.9,0.9) {$n$};
      \node [circle, fill=blue!40](G) at (0.9,-0.9) {$3$};
       \node [circle, fill=blue!40](H) at (-0.9,0.9) {$7$};
        \node [circle, fill=blue!40](I) at (-0.9,-0.9) {$5$};
    \draw [<->, thick, draw= blue](A) -- (B);
    \draw [<->, thick, draw= blue](A) -- (C);
    \draw [<->, thick, draw= blue](A) -- (D);
    \draw [<->, thick, draw= blue](A) -- (E);
    \draw [<->, thick, draw= blue](A) -- (F);
    \draw [<->, thick, draw= blue](A) -- (G);
    \draw [<->, thick, draw= blue](A) -- (H);
    \draw [<->, thick, draw= blue](A) -- (I);
\end{scope}
\end{tikzpicture}
  \caption{}
  \label{star1}
  \end{subfigure}
    \begin{subfigure}[t]{0.4\textwidth}
  \centering
  \begin{tikzpicture}
\begin{scope}
    \node [circle, fill= blue!40](A) at (0,0) {$1$};
    \node [circle, fill=red!70](B) at (1.2,0) {$2$};
    \node [circle, fill=blue!40](C) at (0,-1.2) {$4$};
    \node [circle, fill=blue!40](D) at (-1.2,0) {$6$};
    \node [draw=white](E) at (0,1.2) {$\cdots$};
     \node [circle, fill=blue!40](F) at (0.9,0.9) {$n$};
      \node [circle, fill=blue!40](G) at (0.9,-0.9) {$3$};
       \node [circle, fill=blue!40](H) at (-0.9,0.9) {$7$};
        \node [circle, fill=blue!40](I) at (-0.9,-0.9) {$5$};
    \draw [<->, thick, draw= blue](A) -- (B);
    \draw [<->, thick, draw= blue](A) -- (C);
    \draw [<->, thick, draw= blue](A) -- (D);
    \draw [<->, thick, draw= blue](A) -- (E);
    \draw [<->, thick, draw= blue](A) -- (F);
    \draw [<->, thick, draw= blue](A) -- (G);
    \draw [<->, thick, draw= blue](A) -- (H);
    \draw [<->, thick, draw= blue](A) -- (I);
\end{scope}
\end{tikzpicture}
  \caption{}
  \label{star2}
  \end{subfigure}
\caption{Flow diagram of the star network with the hotspot patch located~\textbf{(\subref{star1})} at the hub, and~\textbf{(\subref{star2})} on leaf $2$.}
\label{fig:star}
\end{figure}
\underline{\it Hot spot at the hub:} 
By Remark~\ref{nonhot-hot}, the disease growth rate at the hub is $\beta_1 - \gamma_1 = q^{({h})}$ and on the leaves is $\beta_2 - \gamma_2 = \dots = \beta_n - \gamma_n = q^{({\ell})}$. Applying~\eqref{eq:H-one} and~\eqref{Lsharp-star} to~\eqref{SIS-diseasegrowth} give the network disease growth rate, denoted by $r^{(1)}$, as
\begin{equation}\label{star-dis-a}
r^{(1)}=\mathcal{A}^{(1)}+\dfrac{1}{\mu_I} \mathcal{H}^{(1)} + o\left( \dfrac{1}{\mu_I} \right)=\dfrac{q^{(h)}+(n-1)q^{(\ell)}}{n}+ \dfrac{1}{\mu_I}\dfrac{n-1}{n^3}(q^{(h)}-q^{(\ell)})^2+ o\left( \dfrac{1}{\mu_I} \right).
\end{equation}

\underline{\it Hot spot on a leaf:} Without loss of generality, we assume that the hot spot is located on leaf $2$, as shown in Figure~\ref{fig:star}(\subref{star2}). Thus, $\beta_2 - \gamma_2 = q^{(h)}$ and $\beta_i - \gamma_i = q^{(\ell)}$ for $1 \leq i \neq 2  \leq n$. Similar to the previous case, the network disease growth rate, denoted by $r^{(2)}$, is given by
\begin{equation}\label{star-dis-b}
r^{(2)}=\mathcal{A}^{(2)}+\frac{1}{\mu_I} \mathcal{H}^{(2)} + o\left( \frac{1}{\mu_I} \right)=\frac{q^{(h)}+(n-1)q^{(\ell)}}{n}+ \frac{1}{\mu_I}\frac{n^2 - n-1}{n^3}(q^{(h)}-q^{(\ell)})^2+ o\left( \frac{1}{\mu_I} \right).
\end{equation}
By comparing~\eqref{star-dis-a} and~\eqref{star-dis-b}, it is clear that
$$\mathcal{A}^{(2)}=\mathcal{A}^{(1)} \quad \mathrm{and}\quad\mathcal{H}^{(2)}>\mathcal{H}^{(1)} \quad \mathrm{for} \quad n \geq 3,$$
and thus, by Theorem~\ref{thm-order-det}, $r^{(2)}>r^{(1)}$. This implies that  given a sufficiently rapid dispersion rate of infected individuals $\mu_I$ in network~\ref{fig:star1}(\subref{star}) of at least three patches, a hot spot located on a leaf patch has a greater impact on the overall disease growth rate of the network compared to a hot spot at the hub.


\subsection{Path network with one hot spot}\label{sec:4-patch} In this section, we apply the SIS model in~\eqref{eq-SIS} to a path network of four patches, as depicted in Figure~\ref{path}. For simplicity purposes, the movement weight among the patches being is assumed  to be $1$. However, it is important to note that a similar approach can be extended to any path network of n patches, for $n \ge 4$, and with asymmetric movements.
\begin{figure}[H]
 \centering
 \begin{tikzpicture}
\begin{scope}[every node/.style={thick,draw}]
    \node (A)[circle, fill= white] at (1,0) {$1$};
    \node (B)[circle, fill= white] at (2.4,0) {$2$};
    \node (C)[circle, fill= white] at (3.8,0) {$3$};
    \node (D)[circle, fill= white] at (5.2,0) {$4$};
    \draw [<->, thick, draw= blue](A) -- (B);
    \draw [<->, thick, draw= blue](B) -- (C);
    \draw [<->, thick, draw= blue](C) -- (D);
\end{scope}
\end{tikzpicture}
\hfill
  \caption{Demonstration of of a flow diagram of a 4-patch path network with symmetric movement among patches.}
  \label{path}
 \end{figure}
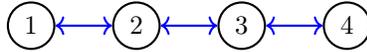
\noindent The Laplacian matrix $L$ associated to Figure~\ref{path} and its group inverse $L^\#$, by~\eqref{eq:Lsharp-partition}, are given by
\begin{equation}\label{LLsharp-path}
     L = \begin{pmatrix}
       1&-1&0&0\\
       -1&2&-1&0\\
       0&-1&2&-1\\
       0&0&-1&1
    \end{pmatrix} \quad \mathrm{and} \quad  L^\# = \frac{1}{8}\begin{pmatrix}
    7&1&-3&-5\\
    1&3&-1&-3\\
    -3&-1&3&1\\
    -5&-3&1&7
    \end{pmatrix}.
\end{equation}
\noindent Next, we explore different scenarios of one hot spot patch in network~\ref{path}. Note that the symmetric movement between the patches lead to two scenarios for the placement of one hot spot patch, as illustrated in Figure~\ref{fig:1hot-spot}(\subref{path1}) and~(\subref{path2}).
Applying~\eqref{eq:H-one} and~\eqref{LLsharp-path} to~\eqref{SIS-diseasegrowth} give the network disease growth rate of the above mentioned scenarios as
\begin{align*}
r^{(1)}&= \mathcal{A}^{(1)}+\frac{1}{\mu_I}\mathcal{H}^{(1)}+o\left(\frac{1}{\mu_I} \right)=\frac{q^{(h)}+3q^{(\ell)}}{4}+\frac{7(q^{(h)}-q^{(\ell)})^2}{32\mu_I}+o\left(\frac{1}{\mu_I} \right),\\
r^{(2)}&= \mathcal{A}^{(2)}+\frac{1}{\mu_I}\mathcal{H}^{(2)}+o\left(\frac{1}{\mu_I} \right)=\frac{q^{(h)}+3q^{(\ell)}}{4}+\frac{3(q^{(h)}-q^{(\ell)})^2}{32 \mu_I}+o\left(\frac{1}{\mu_I} \right).
\end{align*}
Clearly,  $\mathcal{A}^{(2)}=\mathcal{A}^{(1)}$ and $\mathcal{H}^{(2)}<\mathcal{H}^{(1)}$, so by Theorem~\ref{thm-order-det}, 
\begin{equation}\label{eq:r-one-hot}
r^{(2)}<r^{(1)}.
\end{equation}

\noindent Biologically, this implies that for a sufficiently rapid movement rate of infected individuals $\mu_I>0$, the placement of one hot spot in patch $1$ increases the likelihood of an outbreak in network~\ref{path} more than a hot spot in patch $2$.

 Next, we illustrate our above observations using numerical simulations. We also demonstrate that the network basic reproduction number $\mathcal{R}_0$ exhibits similar behavior to the network disease growth rate $r$. In our numerical simulations, we let the recovery rate in all four patches be the same and set $\gamma_i=1$. We choose the transmission rate of hot spot (non-hot spot) patch $i$ as $\beta_i= 1.5$ ($\beta_i =0.15$). Thus, by Remark~\ref{nonhot-hot}, $q^{(h)}=0.5$ and  $q^{(\ell)}=-0.85$.
Note that the basic reproduction number of patch $i$, for $1\leq i\leq 4$, in isolation is given $\mathcal{R}_0^{(i)}=\dfrac{\beta_i}{\gamma_i}$, see~\cite{diekmann1990definition, gao2019travel,van2002reproduction}. Thus, 
 $\mathcal{R}_0^{(h)}=1.5$ and $\mathcal{R}_0^{(\ell)}=0.15$, where $\mathcal{R}_0^{(h)}$ and $\mathcal{R}_0^{(\ell)}$ denote basic reproduction numbers for hot spot and non-hot spot patches, respectively.
\begin{figure}[H]
 \centering
   \begin{subfigure}[t]{0.49\textwidth}
  \centering
\begin{tikzpicture}\label{one-hot-spot}
\begin{scope}
    \node (A)[circle, thick, fill= red!70] at (1,0) {$1$};
    \node (B)[circle, thick, fill= blue!40] at (2.4,0) {$2$};
    \node (C)[circle, thick, fill= blue!40] at (3.8,0) {$3$};
    \node (D)[circle, thick, fill= blue!40] at (5.2,0) {$4$};
    \draw [<->, thick, draw= blue](A) -- (B);
    \draw [<->, thick, draw= blue](B) -- (C);
    \draw [<->, thick, draw= blue](C) -- (D);
\end{scope}
\end{tikzpicture}
  \caption{}
  \label{path1}
  \end{subfigure}
    \begin{subfigure}[t]{0.49\textwidth}
  \centering
  \begin{tikzpicture}
\begin{scope}
    \node (A)[circle, thick, fill= blue!40] at (1,0) {$1$};
    \node (B)[circle, thick,  fill= red!70] at (2.4,0) {$2$};
    \node (C)[circle, thick, fill= blue!40] at (3.8,0) {$3$};
    \node (D)[circle, thick, fill= blue!40] at (5.2,0) {$4$};
    \draw [<->, thick, draw= blue](A) -- (B);
    \draw [<->, thick, draw= blue](B) -- (C);
    \draw [<->, thick, draw= blue](C) -- (D);
\end{scope}
\end{tikzpicture}
  \caption{}
  \label{path2}
  \end{subfigure}
 \begin{subfigure}[t]{0.49\textwidth}
\centering
 \includegraphics[width=\textwidth]{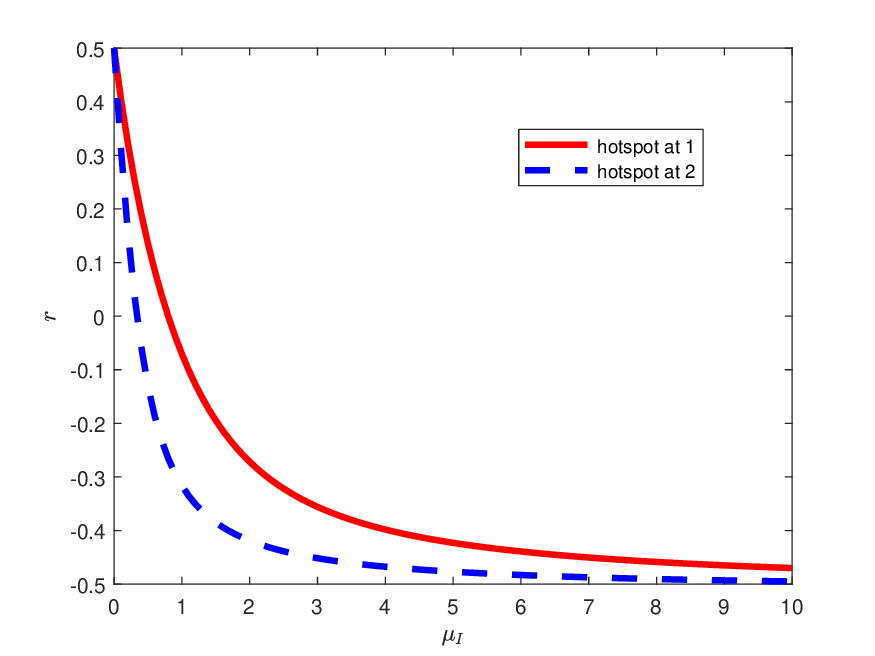} 
 \caption{}
 \label{h1r}
  \end{subfigure}%
  \hfill
  \begin{subfigure}[t]{0.49\textwidth}
  \includegraphics[width=\textwidth]{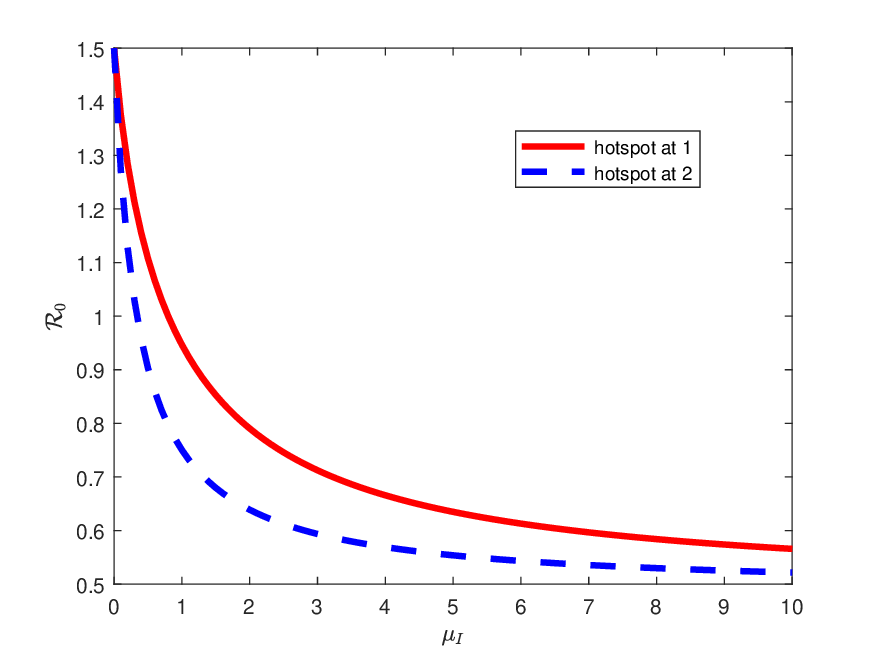}
  \caption{}
  \label{h1R0}
  \end{subfigure}%

\caption{Flow Diagrams of one hot spot in the 4-patch path network with symmetric movement on patch 1~(\subref{path1}) and path 2~(\subref{path2}). Comparison of the values of the network disease growth rates~(\subref{h1r}) and the basic reproduction numbers~(\subref{h1R0}) of both scenarios with respect to the dispersion rate $\mu_ I$.}
\label{fig:1hot-spot}
\end{figure}
As Figure~\ref{fig:1hot-spot}(\subref{h1r}) shows, the placement of one hot spot on patch $1$ leads to a higher network disease growth rate than on patch $2$, which agrees with our theoretical result in~\eqref{eq:r-one-hot}. Additionally, both $r^{(1)}$ and $r^{(2)}$ decrease and concave up as $\mu_I$ increases and $-0.5125 \leq r^{(2)} \leq r^{(1)} \leq 0.5$, aligning with the results of Lemma~\ref{mon-conv} and Theorem~\ref{them:appr-r} (by~\eqref{low-up-bound}, the exact values of lower and upper bounds are given by $\mathcal{A}=-0.5125$ and $\max_{1\leq i \leq 4}\{\beta_i -\gamma_i\}=q^{(h)}=0.5$, respectively).

Similarly, Figure~\ref{fig:1hot-spot}(\subref{h1R0}) depicts that the network basic reproduction number corresponding to the hot spot on patch $1$, $\mathcal{R}_0^{(1)}$, is higher than the one with the hot spot on patch $2$, $\mathcal{R}_0^{(2)}$, and they both decrease and concave up as $\mu_I$ increases and $0.1625\leq\mathcal{R}_0^{(2)} < \mathcal{R}_0^{(1)}\leq 1.5$. For a more comprehensive discussion on the network basic reproduction, see~\cite{allen2007asymptotic, chen2020asymptotic, gao2020fast}.

The above visualizations accurately capture the similarities in the behavior of the basic reproduction numbers and their associated disease growth rates. This can implied that, to investigate a disease outbreak in a heterogeneous network, one can study the network disease growth rate rather than the network basic reproduction number.

\subsection{Network with two hot spots} In this section, we examine the impact of different locations of the two hot spot patches in the network as depicted in Figure~\ref{path}. Specifically, we address which allocation tends to result in the largest disease growth rate. The symmetry of the network itself leads to four distinct scenarios of the arrangements of two hot spots, as shown in Figure~\ref{fig:2hot-spot2}.
\begin{figure}[H]
\centering
   \begin{subfigure}[t]{0.48\textwidth}
  \centering
\begin{tikzpicture}
\begin{scope}
    \node (A)[circle, thick, fill= red!70] at (1,0) {$1$};
    \node (B)[circle, thick, fill= red!70] at (2.4,0) {$2$};
    \node (C)[circle, thick, fill= blue!40] at (3.8,0) {$3$};
    \node (D)[circle, thick,  fill= blue!40] at (5.2,0) {$4$};
    \draw [<->, thick, draw= blue](A) -- (B);
    \draw [<->, thick, draw= blue](B) -- (C);
    \draw [<->, thick, draw= blue](C) -- (D);
\end{scope}
\end{tikzpicture}
  \caption{}
  \label{path12}
  \end{subfigure}
    \begin{subfigure}[t]{0.49\textwidth}
  \centering
\begin{tikzpicture}
\begin{scope}
    \node (A)[circle, thick, fill= blue!40] at (1,0) {$1$};
    \node (B)[circle, thick,  fill= red!70] at (2.4,0) {$2$};
    \node (C)[circle, thick, fill= red!70] at (3.8,0) {$3$};
    \node (D)[circle, thick,  fill= blue!40] at (5.2,0) {$4$};
    \draw [<->, thick, draw= blue](A) -- (B);
    \draw [<->, thick, draw= blue](B) -- (C);
    \draw [<->, thick, draw= blue](C) -- (D);
\end{scope}
\end{tikzpicture}
  \caption{}
  \label{path23}
  \end{subfigure}
 \begin{subfigure}[t]{0.49\textwidth}
\centering
\begin{tikzpicture}
\begin{scope}
    \node (A)[circle, thick,  fill= red!70] at (1,0) {$1$};
    \node (B)[circle, thick, fill= blue!40] at (2.4,0) {$2$};
    \node (C)[circle, thick, fill= red!70] at (3.8,0) {$3$};
    \node (D)[circle, thick, fill= blue!40] at (5.2,0) {$4$};
    \draw [<->, thick, draw= blue](A) -- (B);
    \draw [<->, thick, draw= blue](B) -- (C);
    \draw [<->, thick, draw= blue](C) -- (D);
\end{scope}
\end{tikzpicture}
 \caption{}
 \label{path13}
  \end{subfigure}%
  \begin{subfigure}[t]{0.49\textwidth}
  \centering
\begin{tikzpicture}
\begin{scope}
    \node (A)[circle, thick, fill= blue!40] at (1,0) {$1$};
    \node (B)[circle, thick,  fill= red!70] at (2.4,0) {$2$};
    \node (C)[circle, thick, fill= blue!40] at (3.8,0) {$3$};
    \node (D)[circle, thick, fill= red!70] at (5.2,0) {$4$};
    \draw [<->, thick, draw= blue](A) -- (B);
    \draw [<->, thick, draw= blue](B) -- (C);
    \draw [<->, thick, draw= blue](C) -- (D);
\end{scope}
\end{tikzpicture}
  \caption{}
  \label{path24}
  \end{subfigure}%
\caption{Four scenarios of two hot spot arrangements in the 4-patch path network}
\label{fig:2hot-spot2}
\end{figure}
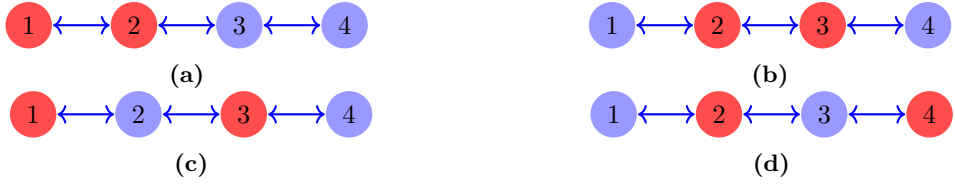
 The following result formulates the network heterogeneous network $\mathcal{H}$ of any network structure with symmetric movement and two hot spot patches. Furthermore, the result below can be extended to investigate the metapopulation growth rate of species in ecology, which is explored in Section~\ref{sec:application2}.
\begin{theorem}\label{thm-2different}
Let $L$ be an $n \times n$ symmetric and irreducible Laplacian matrix, and $Q=\mathrm{diag}(\mathbf{q})=\mathrm{diag}\{q_i\}$ such that for $1\leq i\leq n$,
$q_i =\Tilde{q}$ if $ i \neq k,\ell$, and $q_i=\check{q}$ if $ i = k,\ell$. Then,
\begin{equation}
\mathcal{H}=(\ell^\#_{kk}+\ell^\#_{k \ell }+\ell^\#_{\ell k}+\ell^\#_{\ell \ell})\dfrac{(\check{q} - \Tilde{q})^2}{n} \;,
\end{equation}
where $L^\#=(\ell^\#_{ij})$ denotes the group inverse matrix of $L$.
\end{theorem}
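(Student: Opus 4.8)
The plan is to mirror the strategy used in the proof of Theorem~\ref{thm-1different}, exploiting the symmetry of $L$ together with a decomposition of $\mathbf{q}$ into a component along $\mathbf{1}$ and a correction supported on the two hot-spot indices. Since $L$ is symmetric and irreducible, its normalized right null vector is $\pmb{\theta}=\frac{1}{n}\mathbf{1}$, so by~\eqref{H} the index collapses to the symmetric quadratic form $\mathcal{H}=\frac{1}{n}\mathbf{q}^\top L^\#\mathbf{q}$, exactly as in~\eqref{eq:H1-L-sym}. First I would write, as the two-hot-spot analogue of~\eqref{eq:q1-L-sym},
\begin{equation*}
\mathbf{q}=\Tilde{q}\,\mathbf{1}+(\check{q}-\Tilde{q})(\mathbf{e}_k+\mathbf{e}_\ell),
\end{equation*}
where $\mathbf{e}_k,\mathbf{e}_\ell$ are the standard unit vectors of~\eqref{eq:unit-vector}.

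Next I would substitute this decomposition into $\mathcal{H}=\frac{1}{n}\mathbf{q}^\top L^\#\mathbf{q}$ and expand. The key simplification is that every term carrying a factor of $\mathbf{1}$ vanishes: Lemma~\ref{lem1}(i) gives $\mathbf{1}^\top L^\#=\mathbf{0}$, while Lemma~\ref{lem1}(ii) together with $\pmb{\theta}=\frac{1}{n}\mathbf{1}\in\mathrm{ker}(L)$ gives $L^\#\mathbf{1}=\mathbf{0}$. Hence all cross-terms and the pure $\Tilde{q}\,\mathbf{1}$ term drop out, leaving only
\begin{equation*}
\mathcal{H}=\frac{(\check{q}-\Tilde{q})^2}{n}\,(\mathbf{e}_k+\mathbf{e}_\ell)^\top L^\#(\mathbf{e}_k+\mathbf{e}_\ell).
\end{equation*}

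Finally I would expand the residual quadratic form, using that $\mathbf{e}_i^\top L^\#\mathbf{e}_j=\ell^\#_{ij}$ selects a single entry of the group inverse, to obtain
\begin{equation*}
(\mathbf{e}_k+\mathbf{e}_\ell)^\top L^\#(\mathbf{e}_k+\mathbf{e}_\ell)=\ell^\#_{kk}+\ell^\#_{k\ell}+\ell^\#_{\ell k}+\ell^\#_{\ell\ell},
\end{equation*}
which delivers the claimed formula. I do not expect a genuine obstacle here, since the argument is a direct generalization of the single-hot-spot case; the only point demanding care is verifying that all $\mathbf{1}$-cross-terms annihilate, which is immediate from Lemma~\ref{lem1}. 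One could additionally observe that symmetry forces $\ell^\#_{k\ell}=\ell^\#_{\ell k}$, collapsing the coefficient to $\ell^\#_{kk}+2\ell^\#_{k\ell}+\ell^\#_{\ell\ell}$, though the stated form does not require this reduction.
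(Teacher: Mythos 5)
Your proposal is correct and follows essentially the same route as the paper: the same decomposition $\mathbf{q}=\Tilde{q}\,\mathbf{1}+(\check{q}-\Tilde{q})(\mathbf{e}_k+\mathbf{e}_\ell)$, the same use of Lemma~\ref{lem1}(i)--(ii) to annihilate all terms involving $\mathbf{1}$, and the same final expansion of the residual quadratic form. No gaps.
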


\begin{proof}
Following the similar arguments used in the proof of theorem~\ref{thm-1different}, we have 
$$
\mathbf{q}^\top= \Tilde{q} \mathbf{1}^\top+ (\check{q}-\Tilde{q})\mathbf{e}_k^\top +(\check{q}-\Tilde{q})\mathbf{e}_ \ell^\top,$$ where $\mathbf{e}_k^\top$ and $\mathbf{e}_\ell^\top$ are the unit vectors defined as in~\eqref{eq:unit-vector}. Substituting $\mathbf{q}^\top$ into $\mathcal{H}$ in~\eqref{eq:H1-L-sym} yields
       \begin{align*}
 \mathcal{H}&=\frac{1}{n}\mathbf{q}^\top L^\#\mathbf{q}\\
 &=\dfrac{1}{n}(\Tilde{q}\mathbf{1}^\top+(\check{q}-\Tilde{q})\mathbf{e}_k^\top+(\check{q}-\Tilde{q})\mathbf{e}_\ell^\top)L^\#(\mathbf{1}\Tilde{q}+\mathbf{e}_k(\check{q}-\Tilde{q})+\mathbf{e}_\ell(\check{q}-\Tilde{q}))\\
&=\dfrac{1}{n}\Tilde{q}\mathbf{1}^\top L^\#(\mathbf{1}\Tilde{q}+\mathbf{e}_k(\check{q}-\Tilde{q})+\mathbf{e}_\ell(\check{q}-\Tilde{q}))+\dfrac{1}{n}((\check{q}-\Tilde{q})\mathbf{e}_k^\top+(\check{q}-\Tilde{q})\mathbf{e}_\ell^\top)L^\#\mathbf{1}\Tilde{q}\\
& +\dfrac{1}{n}((\check{q}-\Tilde{q})\mathbf{e}_k^\top+(\check{q}-\Tilde{q})\mathbf{e}_\ell^\top)L^\#(\mathbf{e}_k(\check{q}-\Tilde{q})+\mathbf{e}_\ell(\check{q}-\Tilde{q})).
\end{align*}
By Lemma~\ref{lem1}(i)-(ii), $\mathbf{1}^\top L^\#=0$
 and $L^\#\mathbf{1}=0$ (since $\frac{1}{n}\mathbf{1}$ is the normalized right null vector of $L$). Thus,
$\mathcal{H}=\dfrac{1}{n}((\check{q}-\Tilde{q})\mathbf{e}_k^\top+(\check{q}-\Tilde{q})\mathbf{e}_\ell^\top)L^\#(\mathbf{e}_k(\check{q}-\Tilde{q})+\mathbf{e}_\ell(\check{q}-\Tilde{q}))=(\ell^\#_{kk}+\ell^\#_{k \ell}+\ell^\#_{\ell k}+\ell^\#_{\ell \ell})\dfrac{(\check{q} -\Tilde{q} )^2}{n}$.
\end{proof}

Following Remark~\ref{nonhot-hot}, Theorem~\ref{thm-2different}, expression~\eqref{SIS-diseasegrowth}, and the group inverse $L^\#$ in~\eqref{LLsharp-path}, the network disease growth rates of the cases in Figure~\ref{fig:2hot-spot2} are given by
\begin{align*}
    r^{(12)}&= \mathcal{A}^{(12)}+\frac{1}{\mu_I}\mathcal{H}^{(12)}+o\left(\frac{1}{\mu_I} \right)=\frac{q^{({h})}+q^{({\ell})}}{2}+\frac{3(q^{({h})}-q^{({\ell})})^2}{8\mu_I}+o\left(\frac{1}{\mu_I} \right),\\
       r^{(23)}&= \mathcal{A}^{(23)}+\frac{1}{\mu_I}\mathcal{H}^{(23)}+o\left(\frac{1}{\mu_I} \right)=\frac{q^{({h})}+q^{({\ell})}}{2}+\frac{(q^{({h})}-q^{({\ell})})^2}{8\mu_I}+o\left(\frac{1}{\mu_I} \right),\\
       r^{(13)}&= \mathcal{A}^{(13)}+\frac{1}{\mu_I}\mathcal{H}^{(13)}+o\left(\frac{1}{\mu_I} \right)=\frac{q^{({h})}+q^{({\ell})}}{2}+\frac{(q^{({h})}-q^{({\ell})})^2}{8\mu_I}+o\left(\frac{1}{\mu_I} \right),\\
       r^{(24)}&= \mathcal{A}^{(24)}+\frac{1}{\mu_I}\mathcal{H}^{(24)}+o\left(\frac{1}{\mu_I} \right)=\frac{q^{({h})}+q^{({\ell})}}{2}+\frac{(q^{({h})}-q^{({\ell})})^2}{8\mu_I}+o\left(\frac{1}{\mu_I} \right),
\end{align*}
where $\mathcal{A}^{(ij)}$ and $\mathcal{H}^{(ij)}$, for $1\leq i\neq j\leq 4$, denote the network average and network heterogeneity index when the hot spots are located on patches $i$ and $j$, respectively.
Clearly,
$\mathcal{A}^{(24)}=\mathcal{A}^{(13)}=\mathcal{A}^{(23)}=\mathcal{A}^{(12)}$ and $\mathcal{H}^{(24)}=\mathcal{H}^{(13)}= \mathcal{H}^{(23)}<\mathcal{H}^{(12)}$. Thus, by Theorem~\ref{thm-order-det}, for a sufficiently large $\mu_I$,
\begin{equation}\label{eq:2-hostspot}
  r^{(24)}, r^{(13)}, r^{(23)} < r^{(12)}.  
\end{equation}
From a biological perspective, this suggests that given a rapid movement of infected individuals, positioning hot spots in patches $1$ and $2$ simultaneously, increases the chance of a disease outbreak compared to the other three scenarios.

Next, we validate our theoretical results on the efficiency of $\mathcal{H}$ in determining a disease outbreak by conducting numerical simulations. Here, we employ the same parameter values as those used in the simulations shown in Figure~\ref{fig:1hot-spot}(\subref{h1r})-(\subref{h1R0}).
\begin{figure}[H]
 \centering
 \begin{subfigure}[t]{0.49\textwidth}
\centering
 \includegraphics[width=\textwidth]{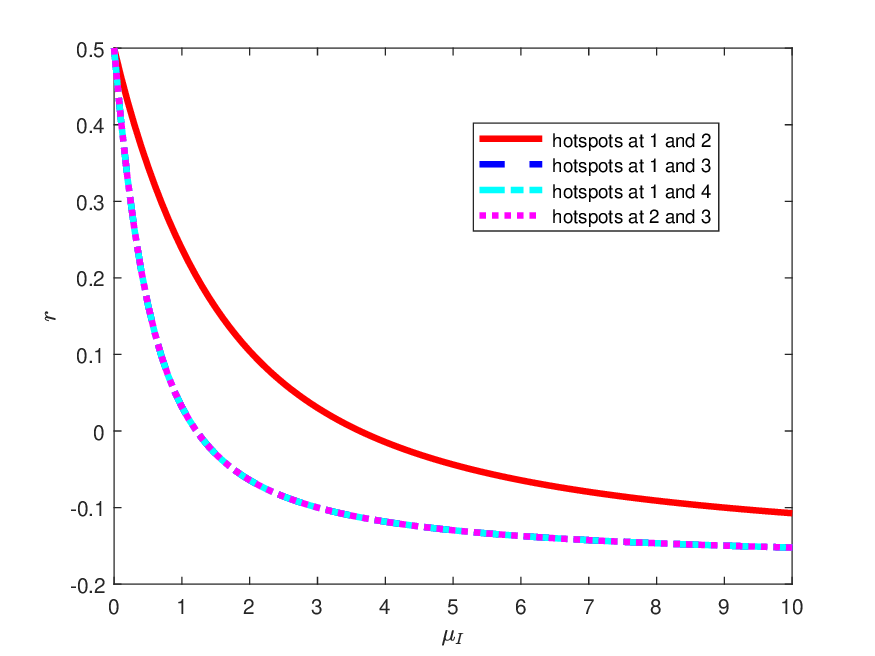} 
 \caption{}
 \label{h2r}
  \end{subfigure}%
  \hfill
  \begin{subfigure}[t]{0.49\textwidth}
  \includegraphics[width=\textwidth]{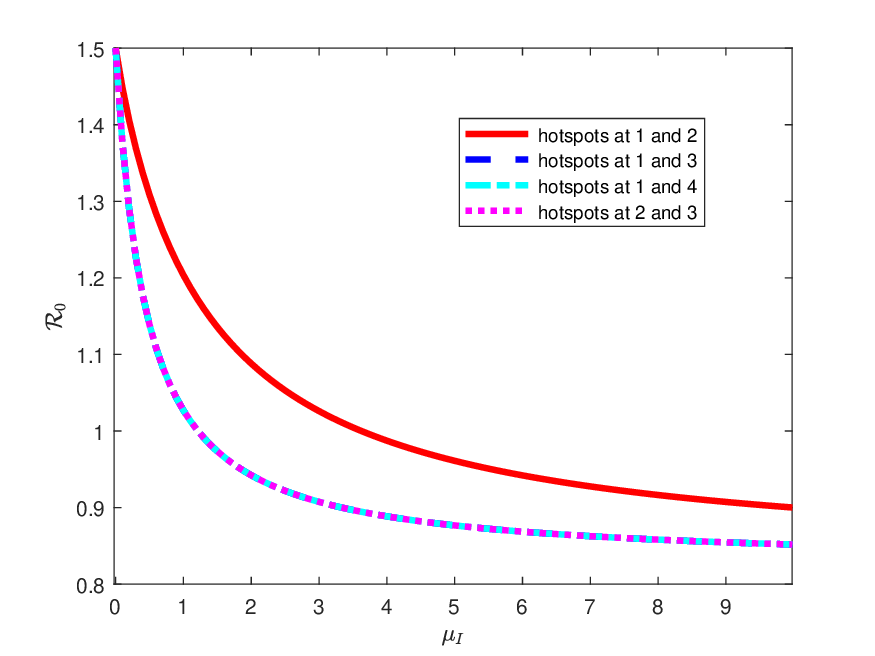}
  \caption{}
  \label{h2R0}
  \end{subfigure}%

\caption{Comparison in the behavior of~(\subref{h2r}) the network disease growth rates, and~(\subref{h2R0}) the network basic reproduction numbers   
 of four scenarios with two hot spots as $\mu_ I$ increases.}
\label{fig:2hot-spot}
\end{figure}

As Figure~\ref{fig:2hot-spot}(\subref{h2r}) illustrates, the disease growth rates of four scenarios in Figure~\ref{fig:2hot-spot2}, $r^{(12)}, r^{(23)}, r^{(13)}$ and $r^{(24)}$, decrease and concave up as $\mu_I$ increases, aligning with the result of Lemma~\ref{mon-conv}. Furthermore, $-0.175\leq r^{(24)}=r^{(13)}=r^{(23)}<r^{(12)}\leq 0.5$, leading to the following observations: First, as expected from \eqref{eq:2-hostspot}, positioning the two hot spots on patches $1$ and $2$, Figure~\ref{fig:2hot-spot2}(\subref{path12}), have the greatest impact on the network disease outbreak. Second, the other three scenarios, Figure~\ref{fig:2hot-spot2}(\subref{path23})-(\subref{path24}), result in the exactly same disease growth rate. Finally, the sharp bounds for the disease growth rates of the scenarios satisfy the sharp bounds provided in Theorem~\ref{thm:main} (by~\eqref{low-up-bound}, the lower and upper bounds are $\mathcal{A}=-0.175$ and $q^{({h})}=0.5$, respectively).
Figure~\ref{fig:2hot-spot}(\subref{h2R0}) shows that the network basic reproduction numbers of the scenarios in Figure~\ref{fig:2hot-spot2} follow a similar pattern to their corresponding disease growth rates.

In the following section, we illustrate the practicality of the network heterogeneity index $\mathcal{H}$ beyond investigating disease outbreaks in heterogeneous environments. Specifically,
 we demonstrate that $\mathcal{H}$ can also serve as a valuable tool in examining the evolution of species' population in such environments.

\section{Application to a Multi-Patch Single Species Model}\label{sec:application2}
In this section, we investigate the metapopulation growth rate of a single species model over two different network structures. Similar to Section~\ref{sec:application}, we compute the value $\mathcal{H}$  under various scenarios of patch resources to highlight the role of $\mathcal{H}$ in understanding the metapopulation growth rate.

Consider a single-species model in a 
 heterogeneous landscape of $n$ patches ($n\ge 2$), in which the species can move freely between the patches. Denote $x_i =x_i(t) \geq 0$, $1\leq i \leq n$, as the population size of the species at time $t \geq 0$ and $f_i(x_i)$ as the population growth rate on patch $i$. Let $\mu\ge 0$ be the species' diffusion coefficient and
$m_{ij}\ge 0$ denote the movement weight from patch $j$ to patch $i$. The system describing this metapopulation dynamics take the following form

\begin{equation}\label{sing-spec}
    x_i' = \dfrac{dx_i}{dt}=x_i f_i(x_i) + \mu\sum_{j=1}^n (m_{ij} x_j - m_{ji}x_i), \quad i=1,2,\ldots, n.
\end{equation}
$L$ denote the Laplacian matrix corresponding to the movement in the network. Following the assumption, $L$ is symmetric, irreducible, and singular M-matrix, with $\pmb{\theta}=(\frac{1}{n}, \dots, \frac{1}{n})^\top$ denoting its normalized right null vector.
The Jacobian matrix $J$ for the linearization~\eqref{sing-spec} at the trivial equilibrium is given by
\begin{equation}\label{J0-single-species}
   J = \mathrm{diag} \{ f_i(0)\}- \mu L.
\end{equation}
Here, patch $i$ is called a sink if $f_i(0) \leq 0$, and a source if $f_i(0)>0$.
As mentioned in Section~\ref{sec:introduction}, the metapopulation population growth rate is determined by the sign of $r=s(J)$, i.e., the population survives when $r>0$ and goes extinct when $r<0$. It follows from the assumptions on $L$ that $J$ in~\eqref{J0-single-species} is irreducible with its off-diagonal entries being non-negative. Thus, following the Perron Frobenius theory, $r$ is the Perron root of $J$. Hence, by~\eqref{spect-expand},~\eqref{eq-H2}, and~\eqref{Perron-A-H}, $r$ is given by
\begin{equation}\label{Single-popgrowth}
    r = \mathcal{A}+\frac{1}{\mu}\mathcal{H}+ o\left( \frac{1}{\mu}\right)=\frac{1}{n}\sum_{i=1}^n f_i(0)  + \frac{1}{ \mu n}\sum_{i=1}^n \sum_{j \ne i}\ell_{ij}^\# (f_i(0)-f_j(0))^2  + o\left( \frac{1}{\mu}\right),
\end{equation}
where $L^\#=(\ell^\#_{ij})$ denotes the group inverse of $L$.

In what follows, we illustrate the importance of the network heterogeneity index $\mathcal{H}$ on the metapopulation growth rate of the system~\eqref{sing-spec} by investigating two network structures. 
 \subsection{A bridge connecting two heterogeneous landscapes}\label{subsec:7path} Consider heterogeneous landscapes, compromising of $3$ and $4$ regions, located on opposite banks of a river, linked together by a bridge, as shown in Figure~(\ref{7path}).
\begin{figure}[H]
 \centering
\begin{tikzpicture}[scale=0.8,line width=1pt]
\begin{scope}[every node/.style={thick,draw}]
    \node [circle](A) at (0.1,0) {$3$};
    \node [circle](B) at (-1.2,0.9) {$1$};
    \node [circle](C) at (-1.2,-0.9) {$2$};
    \node [circle](D) at (1.9,0) {$4$};
    \node [circle](E) at (3,0.9) {$5$};
    \node [circle](F) at (3,-0.9) {$6$};
    \node [circle](G) at (4.1,0) {$7$};
   
    \draw [<->, draw= Green](A) -- (B);
    \draw [<->,  thick, draw= Green](A) -- (C);
    \draw [<->, thick, draw= Green](B) -- (C);
    \draw[<->, thick, draw= Green](A) -- (D);
    \draw [<->, thick, draw= Green](D) -- (E);
    \draw [<->, thick, draw= Green](D) -- (F);
    \draw [<->, thick, draw= Green](E) -- (G);
    \draw [<->, thick, draw= Green](F) -- (G);
\end{scope}
\end{tikzpicture}
\hfill
  \caption{Configuration of two heterogeneous environments connected via a bridge}
  \label{7path}
 \end{figure}
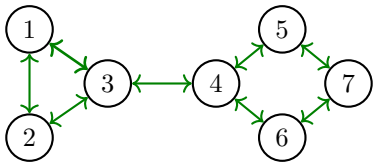
\noindent By~\eqref{eq:Lsharp-partition}, the group inverse $L^\#$ of the Laplacian matrix $L$ corresponding to the symmetric movement with weight one is given by
\begin{equation}\label{Lsharp-7patch}
L^\#=(\ell^\#_{ij})=\frac{1}{588}
\begin{pmatrix}
500 & 304 & 192 & -144 & -270 &-270 & -312\\
304 & 500 & 192 & -144 & -270 &-270 & -312\\
192 & 192 & 276 & -60 & -186 & -186 & -228\\
-144 & -144 & -60 & 192 & 66 & 66 & 24\\
-270 & -270 & -186 & 66 & 381 & 87 & 192\\
-270 & -270 & -186 & 66  & 87 & 381 & 192\\
-312 & -312 & -228 & 24 & 192 & 192 & 444
\end{pmatrix}.
\end{equation}
Note that as the bridge serves as a bottleneck that connects the two networks, $\ell^\#_{ij}$ values within the $3 \times 3$
and $4 \times 4$ blocks are notably large and positive, emphasizing the strong connection within each network. On the other hand, $\ell^\#_{ij}$ are large and negative outside of these blocks, indicating the weak connectivity between the regions of two networks. 

Next, we investigate the metapopulation growth rate of the network in configuration~\ref{7path} under three different source-sink scenarios: one source patch, two source patches, and three source patches.
\begin{remark}\label{lab:sou-sin}
    In all the scenarios below, we assume the population growth rate of the source patches as $q^{(p)}=1$ (colored in green) and sink patches as $q^{(n)}=-1$ (colored in yellow). 
\end{remark}

\underline{\it One-source scenario:} 
The symmetric movement among the patches results in five distinct scenarios for locating one source patch in the network, as patches $1$ and $2$ as well as patches $4$ and $5$ can be considered interchangeably. It follows from~\eqref{Single-popgrowth} that the network average of all the one-source patch scenarios are equal to
${\small{\mathcal{A}=\dfrac{1}{7}(q^{(p)}+6q^{(n)})=-\dfrac{5}{7}}}$.
Moreover, by Theorem~\ref{thm-1different}, the network heterogeneity index of each scenario is given by \begin{equation}\label{h:7patch1}\mathcal{H}^{(k)}={\small{\dfrac{\ell^\#_{kk}(q^{(p)}-q^{(n)})^2}{7}=\dfrac{4\ell^\#_{kk}}{7}}},
\end{equation} where $k$ denotes the location of the source patch in each scenario. Substituting $\ell_{kk}^\#$ from~\eqref{Lsharp-7patch} into~\eqref{h:7patch1} results in $\mathcal{H}^{(4)}<\mathcal{H}^{(3)}<\mathcal{H}^{(5)}<\mathcal{H}^{(7)}<\mathcal{H}^{(1)}$ with their exact values shown in Figure exact values are shown in Figure~\ref{fig:7path1}. Thus, by Theorem~\ref{thm-order-det}, $r^{(4)}<r^{(3)}<r^{(5)}<r^{(7)}<r^{(1)}$.
\begin{figure}[H]
 \centering
   \begin{subfigure}[t]{0.49\textwidth}
  \centering
\begin{tikzpicture}[scale=0.67,line width=1pt]
\begin{scope}
\node(n)[circle, fill= white,node/.style={white}] at (-3.5,0) {{\small$\mathcal{H}^{(1)}=\dfrac{500}{1029}$}};
    \node [circle,fill=orange!40!yellow](A) at (0,0) {$3$};
    \node [circle, fill=green!80](B) at (-1.3,1) {$1$};
    \node [circle, fill=orange!40!yellow](C) at (-1.3,-1) {$2$};
    \node [circle, fill=orange!40!yellow](D) at (2,0) {$4$};
    \node [circle, fill=orange!40!yellow](E) at (3.1,1) {$5$};
    \node [circle, fill=orange!40!yellow](F) at (3.1,-1) {$6$};
    \node [circle, fill=orange!40!yellow](G) at (4.2,0) {$7$};
     \node [draw=white] at (1,-2) {};
    \draw [<->, draw= Green](A) -- (B);
    \draw [<->,  draw= Green](A) -- (C);
    \draw [<->, draw= Green](B) -- (C);
    \draw[<->,  draw= Green](A) -- (D);
    \draw [<->,  draw= Green](D) -- (E);
    \draw [<->, draw= Green](D) -- (F);
    \draw [<->,  draw= Green](E) -- (G);
    \draw [<->, draw= Green](F) -- (G);
\end{scope}
\end{tikzpicture}
  \label{7path1}
  \end{subfigure}
    \begin{subfigure}[t]{0.49\textwidth}
  \centering
\begin{tikzpicture}[scale=0.67,line width=1pt]
\begin{scope}
\node(n)[circle, fill= white,node/.style={white}] at (6.5,0) {{\small$\mathcal{H}^{(7)}=\dfrac{444}{1029}$}};
    \node [circle, fill=orange!40!yellow](A) at (0,0) {$3$};
    \node [circle,fill=orange!40!yellow](B) at (-1.3,1) {$1$};
    \node [circle,fill=orange!40!yellow](C) at (-1.3,-1) {$2$};
    \node [circle,fill=orange!40!yellow](D) at (2,0) {$4$};
    \node [circle,fill=orange!40!yellow](E) at (3.1,1) {$5$};
    \node [circle,fill=orange!40!yellow](F) at (3.1,-1) {$6$};
    \node [circle,fill=green!80](G) at (4.2,0) {$7$};
    \node [draw=white] at (1,-2) {};
    \draw [<->, draw= Green](A) -- (B);
    \draw [<->,  draw= Green](A) -- (C);
    \draw [<->,  draw= Green](B) -- (C);
    \draw[<->,  draw= Green](A) -- (D);
    \draw [<->,  draw= Green](D) -- (E);
    \draw [<->,  draw= Green](D) -- (F);
    \draw [<->,  draw= Green](E) -- (G);
    \draw [<->,  draw= Green](F) -- (G);
\end{scope}
\end{tikzpicture}
  \label{7path7}
  \end{subfigure}
 \begin{subfigure}[t]{0.49\textwidth}
\centering
\begin{tikzpicture}[scale=0.67,line width=1pt]
\begin{scope}
\node(n)[circle, fill= white,node/.style={white}] at (-3.5,0) {{\small$\mathcal{H}^{(5)}=\dfrac{381}{1029}$}};
    \node [circle, fill=orange!40!yellow](A) at (0,0) {$3$};
    \node [circle,fill=orange!40!yellow](B) at (-1.3,1) {$1$};
    \node [circle,fill=orange!40!yellow](C) at (-1.3,-1) {$2$};
    \node [circle,fill=orange!40!yellow](D) at (2,0) {$4$};
    \node [circle,fill=green!80](E) at (3.1,1) {$5$};
    \node [circle,fill=orange!40!yellow](F) at (3.1,-1) {$6$};
    \node [circle,fill=orange!40!yellow](G) at (4.2,0) {$7$};
    \node [draw=white] at (1,-2) {};
    \draw [<->, draw= Green](A) -- (B);
    \draw [<->,  draw= Green](A) -- (C);
    \draw [<->, draw= Green](B) -- (C);
    \draw[<->, draw= Green](A) -- (D);
    \draw [<->,  draw= Green](D) -- (E);
    \draw [<->, draw= Green](D) -- (F);
    \draw [<->,  draw= Green](E) -- (G);
    \draw [<->, draw= Green](F) -- (G);
\end{scope}
\end{tikzpicture}
 \label{7path5}
  \end{subfigure}%
 \begin{subfigure}[t]{0.49\textwidth}
     \centering
     \begin{tikzpicture}[scale=0.67,line width=1pt]
\begin{scope}
\node(n)[circle, fill= white,node/.style={white}] at (6.5,0) {{\small$\mathcal{H}^{(3)}=\dfrac{276}{1029}$}};
    \node [circle, fill=green!80](A) at (0,0) {$3$};
    \node [circle, fill=orange!40!yellow](B) at (-1.3,1) {$1$};
    \node [circle, fill=orange!40!yellow](C) at (-1.3,-1) {$2$};
    \node [circle, fill=orange!40!yellow](D) at (2,0) {$4$};
    \node [circle, fill=orange!40!yellow](E) at (3.1,1) {$5$};
    \node [circle, fill=orange!40!yellow](F) at (3.1,-1) {$6$};
    \node [circle, fill=orange!40!yellow](G) at (4.2,0) {$7$};
    \node [draw=white] at (1,-2) {};
    \draw [<->, draw= Green](A) -- (B);
    \draw [<->,  draw= Green](A) -- (C);
    \draw [<->,  draw= Green](B) -- (C);
    \draw[<->,  draw= Green](A) -- (D);
    \draw [<->, draw= Green](D) -- (E);
    \draw [<->,  draw= Green](D) -- (F);
    \draw [<->,  draw= Green](E) -- (G);
    \draw [<->,  draw= Green](F) -- (G);
\end{scope}
\end{tikzpicture} 
\label{}
 \end{subfigure}
  \begin{subfigure}[t]{0.5\textwidth}
  \centering
\begin{tikzpicture}[scale=0.67,line width=1pt]
\begin{scope}
\node(n)[circle, fill= white,node/.style={white}] at (1,-2) {{\small $\mathcal{H}^{(4)}=\dfrac{192}{1029}$}};
    \node [circle, fill=orange!40!yellow](A) at (0,0) {$3$};
    \node [circle,fill=orange!40!yellow](B) at (-1.3,1) {$1$};
    \node [circle,fill=orange!40!yellow](C) at (-1.3,-1) {$2$};
    \node [circle,fill=green!80](D) at (2,0) {$4$};
    \node [circle,fill=orange!40!yellow](E) at (3.1,1) {$5$};
    \node [circle,fill=orange!40!yellow](F) at (3.1,-1) {$6$};
    \node [circle,fill=orange!40!yellow](G) at (4.2,0) {$7$};
    \draw [<->, draw= Green](A) -- (B);
    \draw [<->, draw= Green](A) -- (C);
    \draw [<->, draw= Green](B) -- (C);
    \draw[<->, draw= Green](A) -- (D);
    \draw [<->, draw= Green](D) -- (E);
    \draw [<->, draw= Green](D) -- (F);
    \draw [<->, draw= Green](E) -- (G);
    \draw [<->, draw= Green](F) -- (G);
\end{scope}
\end{tikzpicture}
  \label{7path4}
  \end{subfigure}%
\vskip -20pt
\caption{Illustration of $5$ scenarios with $1$ source patch (colored in green) and their corresponding network heterogeneity index values.}
\label{fig:7path1}
\end{figure}
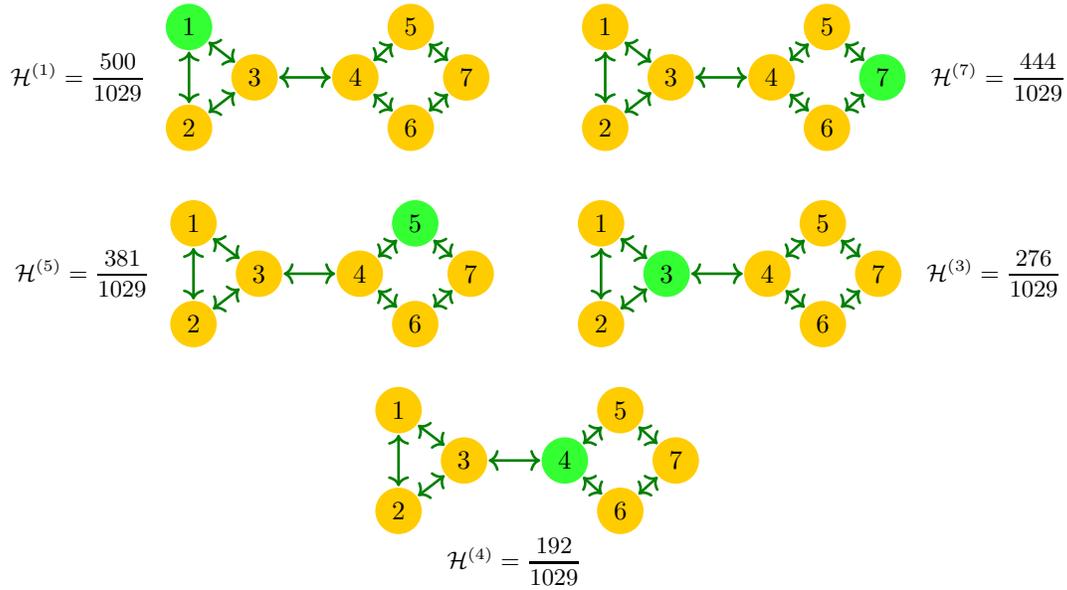
From a biological stand point, this suggest that the placement of a source patch at a corner (patches $1$, $7$ and $5$) can facilitate the metapopulation growth rate. Conversely, a source patch located closer to the bridge (patches $3$ and $4$) can inhibit the metapopulation growth rate.

\medskip
\underline{\it Two-source scenario:} The symmetric movement gives rise to twelve distinct scenarios of locating two source patches in the network, as shown in Figure~\ref{7path}. By~\eqref{Single-popgrowth}, the network average $\mathcal{A}$ in all the scenarios is the same and equal to $-\frac{3}{7}$. Following Theorem~\ref{thm-2different}, the network heterogeneity index for each scenario is given by $\mathcal{H}^{(k\ell)}=\frac{4}{7}(\ell^\#_{kk}+\ell^\#_{k \ell}+\ell^\#_{\ell k}+\ell^\#_{\ell \ell})$, where $k$ and $\ell$ denote the locations of the source patches. By~\eqref{Lsharp-7patch}, the network heterogeneity index attains its highest value when the source patches are clustered together on patches $1$ and $2$ (Figure~\ref{fig:7path23} top left), while it reaches its lowest value when the source patches are located on patches $3$ and $7$ (Figure~\ref{fig:7path23} top right). Thus, following Theorem~\eqref{thm-order-det}, the metapopulation growth rates align with their respective network heterogeneity values. That is, while the species population thrives the most when the source patches are clustered together on one side of the bridge, it is at the lowest when the source patches are located on either side of the bridge.
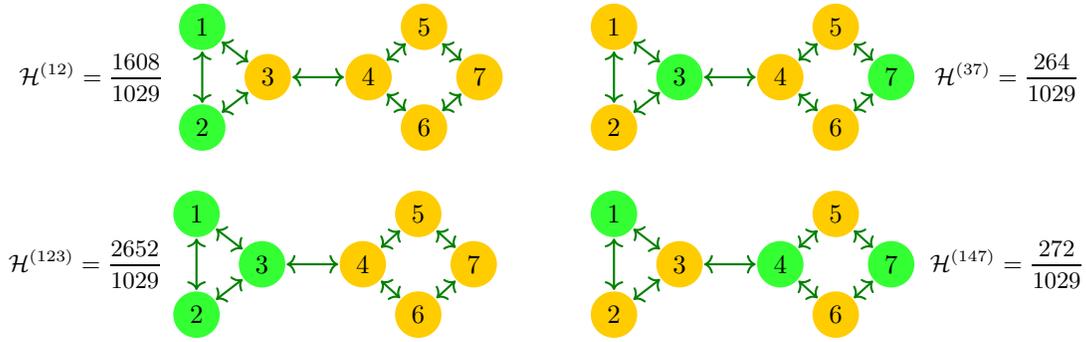
\begin{figure}[H]
 \centering
   \begin{subfigure}[t]{0.49\textwidth}
  \centering
  \begin{tikzpicture}[scale=0.67,line width=1pt]
\begin{scope}
\node(n)[circle, fill= white,node/.style={white}] at (-3.5,0) {{\small{$\mathcal{H}^{(12)}=\dfrac{1608}{1029}$}}};
    \node [circle, fill=orange!40!yellow](A) at (0,0) {$3$};
    \node [circle,fill=green!80](B) at (-1.3,1) {$1$};
    \node [circle,fill=green!80](C) at (-1.3,-1) {$2$};
    \node [circle,fill=orange!40!yellow](D) at (2,0) {$4$};
    \node [circle,fill=orange!40!yellow](E) at (3.1,1) {$5$};
    \node [circle,fill=orange!40!yellow](F) at (3.1,-1) {$6$};
    \node [circle,fill=orange!40!yellow](G) at (4.2,0) {$7$};
   \node [draw=white] at (1,-2) {};
    \draw [<->, draw= Green,thick](A) -- (B);
    \draw [<->,  thick, draw= Green](A) -- (C);
    \draw [<->, thick, draw= Green](B) -- (C);
    \draw[<->, thick, draw= Green](A) -- (D);
    \draw [<->, thick, draw= Green](D) -- (E);
    \draw [<->, thick, draw= Green](D) -- (F);
    \draw [<->, thick, draw= Green](E) -- (G);
    \draw [<->, thick, draw= Green](F) -- (G);
\end{scope}
\end{tikzpicture}
  \label{7path1}
  \end{subfigure}
    \begin{subfigure}[t]{0.49\textwidth}
  \centering
\begin{tikzpicture}[scale=0.67,line width=1pt]
\begin{scope}
\node(n)[circle, fill= white,node/.style={white}] at (6.5,0) {{\small{$\mathcal{H}^{(37)}=\dfrac{264}{1029}$}}};
    \node [circle, fill=green!80](A) at (0,0) {$3$};
    \node [circle,fill=orange!40!yellow](B) at (-1.3,1) {$1$};
    \node [circle,fill=orange!40!yellow](C) at (-1.3,-1) {$2$};
    \node [circle,fill=orange!40!yellow](D) at (2,0) {$4$};
    \node [circle,fill=orange!40!yellow](E) at (3.1,1) {$5$};
    \node [circle,fill=orange!40!yellow](F) at (3.1,-1) {$6$};
    \node [circle,fill=green!80](G) at (4.2,0) {$7$};
    \node [draw=white] at (1,-2) {};
    \draw [<->, draw= Green,thick](A) -- (B);
    \draw [<->,  thick, draw= Green](A) -- (C);
    \draw [<->, thick, draw= Green](B) -- (C);
    \draw [<->, thick, draw= Green](A) -- (D);
    \draw [<->, thick, draw= Green](D) -- (E);
    \draw [<->, thick, draw= Green](D) -- (F);
    \draw [<->, thick, draw= Green](E) -- (G);
    \draw [<->, thick, draw= Green](F) -- (G);
\end{scope}
\end{tikzpicture}
  \label{7path7}
  \end{subfigure}
 \begin{subfigure}[t]{0.49\textwidth}
\centering
  \begin{tikzpicture}[scale=0.67,line width=1pt]
\begin{scope}
    \node [circle, fill=green!80](A) at (0,0) {$3$};
    \node [circle,fill=green!80](B) at (-1.3,1) {$1$};
    \node [circle,fill=green!80](C) at (-1.3,-1) {$2$};
    \node [circle,fill=orange!40!yellow](D) at (2,0) {$4$};
    \node [circle,fill=orange!40!yellow](E) at (3.1,1) {$5$};
    \node [circle,fill=orange!40!yellow](F) at (3.1,-1) {$6$};
    \node [circle,fill=orange!40!yellow](G) at (4.2,0) {$7$};
    \node [draw=white] at (-3.5,0) {{\small{$\mathcal{H}^{(123)}=\dfrac{2652}{1029}$}}};
    \draw [<->, draw=Green,thick](A) -- (B);
    \draw [<->,  thick, draw=Green](A) -- (C);
    \draw [<->, thick, draw= Green](B) -- (C);
    \draw[<->, thick, draw= Green](A) -- (D);
    \draw [<->, thick, draw= Green](D) -- (E);
    \draw [<->, thick, draw=Green](D) -- (F);
    \draw [<->, thick, draw= Green](E) -- (G);
    \draw [<->, thick, draw= Green](F) -- (G);
\end{scope}
\end{tikzpicture}
 \label{7path5}
  \end{subfigure}%
 \begin{subfigure}[t]{0.49\textwidth}
     \centering
\label{}
 \end{subfigure}
  \begin{subfigure}[t]{0.5\textwidth}
  \centering
\begin{tikzpicture}[scale=0.67,line width=1pt]
\begin{scope}
    \node [circle, fill=orange!40!yellow](A) at (0,0) {$3$};
    \node [circle,fill=green!80](B) at (-1.3,1) {$1$};
    \node [circle,fill=orange!40!yellow](C) at (-1.3,-1) {$2$};
    \node [circle,fill=green!80](D) at (2,0) {$4$};
    \node [circle,fill=orange!40!yellow](E) at (3.1,1) {$5$};
    \node [circle,fill=orange!40!yellow](F) at (3.1,-1) {$6$};
    \node [circle,fill=green!80](G) at (4.2,0) {$7$};
    \node [draw=white] at (6.5,0) {{\small{$\mathcal{H}^{(147)}=\dfrac{272}{1029}$}}};
    \draw [<->, draw= Green,thick](A) -- (B);
    \draw [<->,  thick, draw= Green](A) -- (C);
    \draw [<->, thick, draw= Green](B) -- (C);
    \draw[<->, thick, draw= Green](A) -- (D);
    \draw [<->, thick, draw= Green](D) -- (E);
    \draw [<->, thick, draw= Green](D) -- (F);
    \draw [<->, thick, draw= Green](E) -- (G);
    \draw [<->, thick, draw= Green](F) -- (G);
\end{scope}
\end{tikzpicture}
  \label{}
  \end{subfigure}%
\caption{Demonstration of the highest and lowest values of $\mathcal{H}$ in the two-patch source scenario ({\it top plots}), and three-patch source scenario ({\it bottom plots}). }
\label{fig:7path23}
\end{figure}
\underline{\it Three-source scenario:}
The symmetric movement in the network results in the total of eighteen distinct scenarios, with the network average $\mathcal{A}$ being the same in all the scenarios, by~\eqref{Single-popgrowth}. Thus, by Theorem\ref{thm-order-det}, the highest and lowest metapopulation growth rates occur according to their network heterogeneity index values. Following~\eqref{eq-H2} and~\eqref{Lsharp-7patch}, the value of the network heterogeneity index $\mathcal{H}$ is the highest when the source patches are clustered on patches $1, 2$ and $3$, while lowest when the source patches are scattered in the network on patches $1, 4$ and $7$. As expected, clustering the source patches on one side of the bridge (Figure~\ref{fig:7path23} bottom left) yields the highest metapopulation growth rate than dispersing the sources in either side of the bridge (Figure~\ref{fig:7path23} bottom right).
\subsection{Different allocations of resources on a path network}
In this section, we investigate eight different scenarios of the met-population growth rate of the single species model in~\eqref{sing-spec} over the path network in Figure~\ref{path}. In all the scenarios, we make the following two assumptions: (i) All patches in Figure~\ref{path} only feature the source dynamic. That is, the growth rate of patch $i$, $f_i(0)>0$, for $1\leq i \leq 4$.  (ii) The network average $\mathcal{A}$ is the same and equal to $\frac{1}{4}(\sum_{i=1}^4f_i(0))=5.5$, but allocated differently among the patches. That is, the growth rate of patch $i$, $f_i(0)$, varies in quantity in different scenarios.
See Figure~\ref{fig:distA} that illustrates these phenomena. \noindent In particular, Figure~\ref{fig:distA} demonstrates that in scenarios \textit{A-G}, the population growth rate of patches range from $1$ to $10$. However, in scenario \textit{H}, the population growth rate of all patches is equal to $5.5$. 
\begin{figure}[H]
  \centering
\includegraphics[width=0.65\textwidth, height=0.45\textwidth]{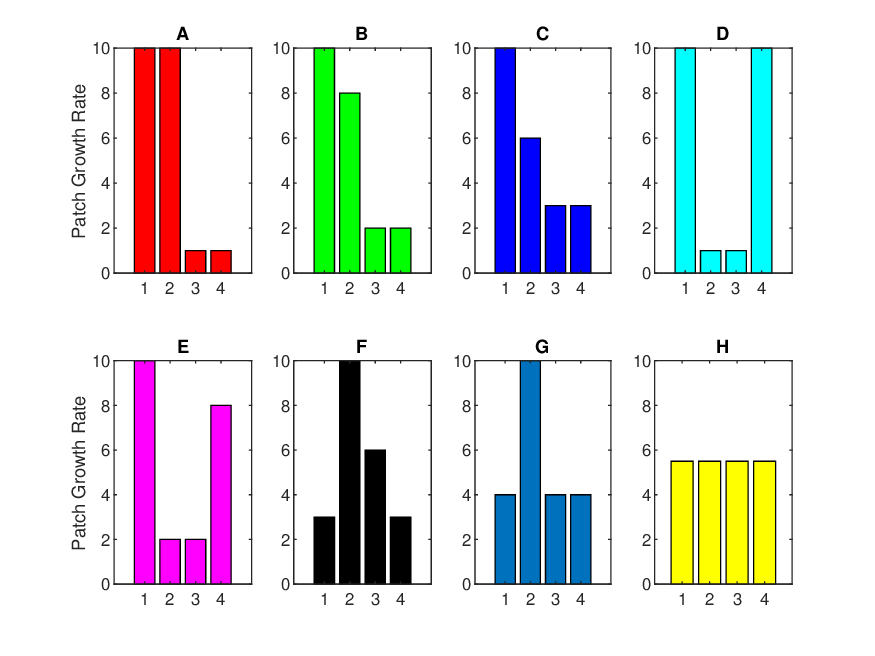}
\caption{Distribution of the network average $\mathcal{A}=5.5$ among four patches of Figure~\ref{path} in eight distinct scenarios.}
\label{fig:distA}
\end{figure}


 Figure~\ref{fig:general-case}(\subref{net-het}) depicts the order of network heterogeneity index values of the stated scenarios in Figure~\ref{fig:distA}. Additionally, it shows that while the network heterogeneity index values of scenarios $A-G$ are all positive, the value of network heterogeneity index of scenario $H$ is zero, aligning with the results of Theorem~\ref{thm-Lsharp-semipos}. The above observations are shown as 
\begin{equation}\label{eqn:orderh}
0=\mathcal{H}^{(H)}<\mathcal{H}^{(G)}<\mathcal{H}^{(F)}<\mathcal{H}^{(E)}<\mathcal{H}^{(D)}<\mathcal{H}^{(C)}<\mathcal{H}^{(B)}<\mathcal{H}^{(A)}=30.375.
\end{equation}
Figure~\ref{fig:general-case}(\subref{net-pop-grow}) illustrates the order of the metapopulation growth rates as well as their lower and upper bounds as
\begin{equation}\label{eqn:orderr}
5.5=r^{(H)} < r^{(G)} < r^{(F)} < r^{(E)} < r^{(D)} <  r^{(C)} < r^{(B)} <r^{(A)}\le 10.
\end{equation}
As expected the order of~\eqref{eqn:orderr} aligns with the order in~\eqref{eqn:orderh}, see Theorem~\ref{thm-order-det}. Furthermore, the sharp bounds in~\eqref{eqn:orderr} are in accordance with the findings of Theorem~\ref{them:appr-r}, see inequality~\eqref{low-up-bound}, where $\mathcal{A}=5.5$ and $\max_{1 \le i\le 4}\{f_i(0)\}=10$. 
\begin{figure}[H]
 \centering
  \begin{subfigure}[t]{0.49\textwidth}
  \includegraphics[width=\textwidth]{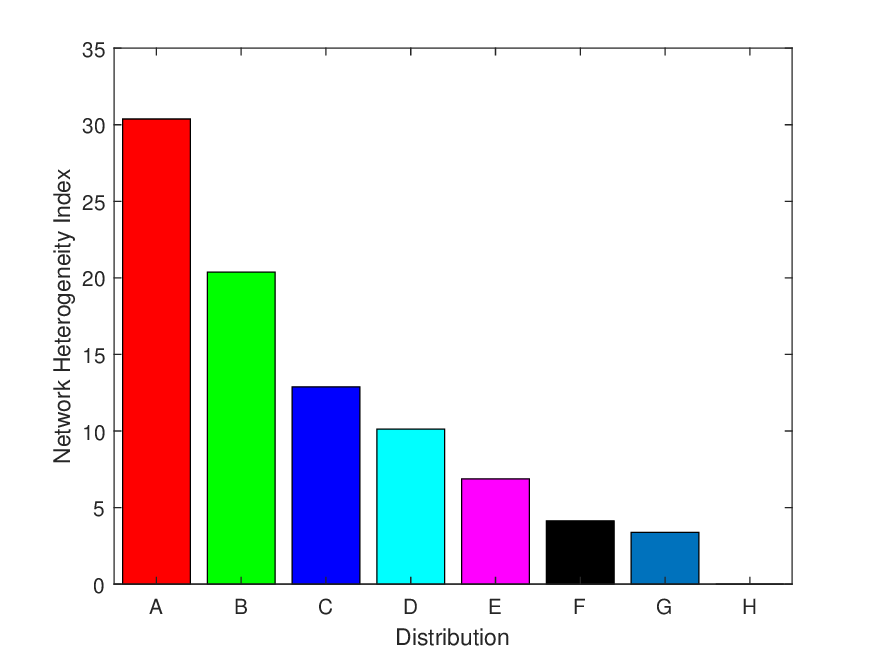}
  \caption{}
  \label{net-het}
  \end{subfigure}
  \begin{subfigure}[t]{0.49\textwidth}
  \includegraphics[width=\textwidth]{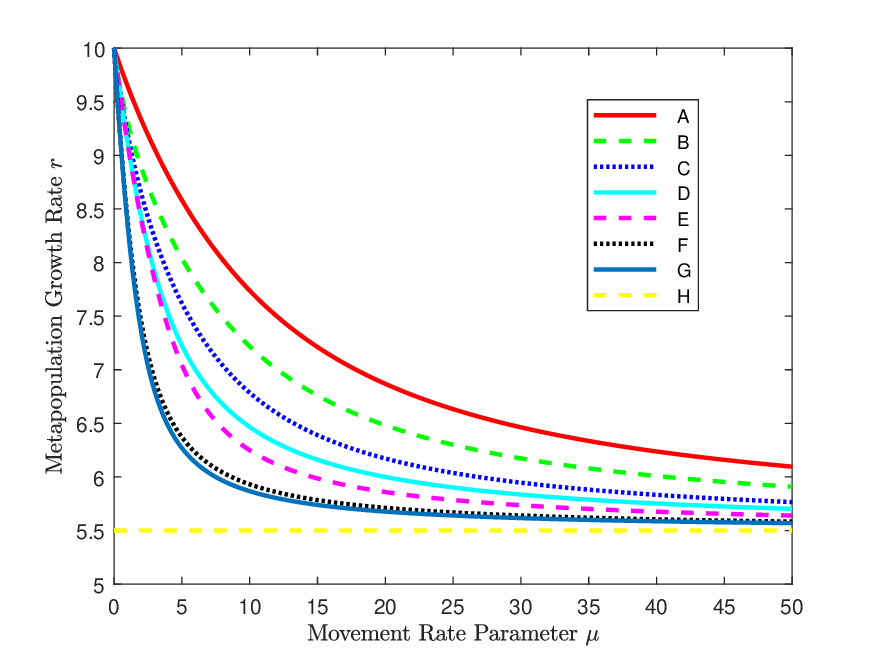}
  \caption{}
  \label{net-pop-grow}
  \end{subfigure}
\caption{\textbf{(\subref{net-het})} Comparison of the corresponding network heterogeneity indices \textbf{(\subref{net-pop-grow})} Demonstration of the behaviors of metapopulation growth rates of all scenarios}
\label{fig:general-case}
\end{figure}
Figures~\ref{fig:distA} and~\ref{fig:general-case} lead us to the following biological observations:  As the movement rate of a species over the path network (depicted in ~\ref{path}) increases, the metapopulation growth rate of all scenarios \textit{A--G} decreases, as shown in Lemma~\ref{mon-conv}, whereas for the homogeneous scenario \textit{H}, $r^{(H)}$ remains at $5.5$, consistent with Theorem~\ref{them:appr-r}(ii). Furthermore, clustering the source patches together at a corner of the network, such as the one in Figure~\ref{path}, favors the metapopulation growth rate, while distributing them inhibits it.

\section{Discussion}\label{sec:conclusion}
In recent decades, growing global connectivity and ease of travel have greatly impacted many aspects of human experience, including economies, cultures, and the exchange of ideas and information. While globalization has brought significant benefits, it has also contributed to the increased spread of infections and pathogens. For example, the swift transmission of infectious diseases like COVID-19 across borders has had a global impact on individuals and communities. The effects of globalization are not limited to human impacts. The increase in connectivity has also 
 influenced the evolutionary dynamics of other populations by altering environmental conditions and facilitating the movement of species across regions that were once isolated.

The impacts of globalization are explored using the concept of heterogeneous environments. In recent years, there have been numerous studies on the impact of the structure of heterogeneous environments, dynamics of the regions, and dispersal between them on infectious diseases, see~\cite{cantrell2020evolution, evans2013stochastic,hastings1983can,hening2018stochastic,holland2008strong,schreiber2011persistence}, as well as evolution of populations of species, see~\cite{allen2007asymptotic, arino2003multi,kirkland2021impact,gao2019travel,eisenberg2013,tien2015disease}.

Tien et al.~\cite{tien2015disease} investigated the invasiveness of a water-borne disease, like cholera, in heterogeneous environments of connected patches.  Inspired by the work of Langenhop~\cite{langenhop1971laurent}, they employed a Lauren series expansion of a perturbed irreducible Laplacian matrix to approximate the network basic reproduction number $\mathcal{R}_0$ up to the second order perturbation term. This expansion, which highlights the crucial role of the group inverse of such Laplacian matrix, serves as the motivation of this paper.

In this paper, we investigated the spread of infectious diseases in heterogeneous networks by deriving an expansion for the network disease growth rate $r$. Additionally, since $r$ can be interpreted as the metapopulation growth rate in ecological settings,  our expansion on $r$ 
 can be extended to the study of population evolution of various species.

Our theoretical analysis in Section~\ref{sec:perturbation theories} revealed that $r$ is the Perron root of the Jacobian matrix $J$ at an equilibrium point. We utilized perturbation analysis theory and the properties of group inverses of irreducible Laplacian matrices to approximate $r$ with respect to the dispersal rate $\mu$ up to the second perturbation term as
$r= \mathcal{A}+\dfrac{1}{\mu}\mathcal{H}+ o(\dfrac{1}{\mu})$,
where 
\begin{itemize}
    \item[$\mathcal{A}$] is the network average, denoting the weighted average of the dynamics of each patch in isolation where the distribution of the weights are in accordance to the network structure. 
    \item[$\mathcal{H}$] is the network heterogeneity index, capturing the network structure, the patch dynamics, and how variations in a patch reflects on the other one.
\end{itemize}

As $\mathcal{H}$ depends on the heterogeneity, network structure and the relationship between the regions, any alternation or perturbation in these parameters can change the value of $\mathcal{H}$, and thus impact $r$. In Section~\ref{sec:networkheterogenity}, we presented mathematical analysis on the network heterogeneity index $\mathcal{H}$ to enhance our understanding of the fundamental role played by $\mathcal{H}$ in comprehending the evolution of populations and spread of infectious diseases.

In Section~\ref{sec:application}, we applied the theoretical results on $\mathcal{H}$ to a multi-patch SIS (susceptible-infected-susceptible) model over two different network structures, illustrating the efficiency of $\mathcal{H}$ in understanding the spread of an infectious disease in heterogeneous environments. We considered different scenarios of the placement of disease hotspots to investigate their influence on the entire environment. Our results revealed that the clustering hotspots together at a corner of the network facilitates the disease spread, whereas scattering the hotspots throughout the network hinders its progression. Furthermore, we conducted numerical simulations to further support our theoretical analysis. In Section~\ref{sec:application2}, we extended our mathematical analysis to ecological settings by examining a multi-patch single species model over two heterogeneous network structures. We found analogous results to those observed in the study of SIS model.

Following our results, the expansion of $r$ does not impose any assumptions on heterogeneous environments beyond the connectivity of patches. Thus, it can be applied to various multi-patch epidemic models, including SIR (susceptible-infected-recovered) and SIRS (susceptible-infectious-recovered-susceptible) models, as well as SEIR (susceptible-exposed-infectious-recovered) and SIWR (susceptible-infectious-water-recovered) models in which the dispersal of infectious population is ignored. Furthermore, this expansion can be extended to multi-patch Predator-Prey models, enabling us to study the long term behavior of the predator and prey populations under specific assumptions.

\vspace{10pt}
\noindent{\bf Acknowledgments.}

This research was partially supported by the National Science Foundation (NSF) through the grant DMS-1716445. P. Yazdanbakhsh acknowledges the support of the Yvette Kanouff Industrial Mathematics Scholarship at the University of Central Florida.

\end{document}